\documentclass[a4paper,USenglish]{lipics-v2021}
\pdfoutput=1
\nolinenumbers
\hideLIPIcs
\usepackage[noend]{algpseudocode}
\usepackage[many]{tcolorbox}
\usepackage{mathtools}
\usepackage{bm}

\usepackage{silence}
\makeatletter \newcommand{\apxthm}[3]{
\newcounter{sec#2} \newcounter{val#2} \newcounter{use#2}
\setcounter{sec#2}{\value{section}} \setcounter{val#2}{\value{theorem}}
\global\expandafter\let\csname apx@thesec@#2\endcsname\thesection
\begin{#1} \label{#2} {#3} \end{#1}
\long\@namedef{#2}{
    \let\current@thesection\thesection
    \@tempcnta=\value{section} \@tempcntb=\value{theorem}    \expandafter\let\expandafter\thesection\csname apx@thesec@#2\endcsname
    \setcounter{section}{\value{sec#2}} \setcounter{theorem}{\value{val#2}}
    \phantomsection \label{#2*} \begin{#1} {#3} \end{#1}
    \setcounter{section}{\@tempcnta} \setcounter{theorem}{\@tempcntb}
    \let\thesection\current@thesection
}}

\newenvironment{algobox}[2][]{
    \begin{center}
    \begin{tcolorbox}[enhanced,title=\centering \large {#2},colback=white,colframe=black!80,width=\textwidth]
    \underline{\textbf{Code for a party $i$}{#1}}
    \begin{algorithmic}[1]
}{
    \end{algorithmic} 
    \end{tcolorbox} 
    \end{center}
}

\makeatletter\newif\ifTightALG\TightALGfalse
\patchcmd{\ALG@doentity}{\item[]\nointerlineskip}{\ifTightALG\else\item[]\nointerlineskip\fi}{}{}
\def\wALGone#1{\expandafter\let\csname old#1\expandafter\endcsname\csname #1\endcsname
  \expandafter\def\csname #1\endcsname##1{\TightALGtrue\csname old#1\endcsname{##1}\TightALGfalse}}
\def\wALGzero#1{\expandafter\let\csname old#1\expandafter\endcsname\csname #1\endcsname
  \expandafter\def\csname #1\endcsname{\TightALGtrue\csname old#1\endcsname\TightALGfalse}}
\wALGone{If} \wALGone{For}
\wALGzero{Else} \wALGzero{EndIf} \wALGzero{EndFor}
\makeatother

\algblock{Upon}{EndUpon} \algrenewtext{Upon}[1]{\textbf{upon} {#1} \textbf{do}}
\makeatletter \ifthenelse{\equal{\ALG@noend}{t}} {\algtext*{EndUpon}} \makeatother

\DeclarePairedDelimiter{\ceil}{\lceil}{\rceil}
\newcommand{\msg}[1]{\langle {#1} \rangle}

\newcommand{\supp}{\mathsf{supp}}
\newcommand{\pj}{\mathsf{proj}}
\newcommand{\tpj}{\mathsf{2\textsf-proj}}
\newcommand{\E}{\mathbb{E}}
\newcommand{\Z}{\mathcal{Z}}
\newcommand{\Q}{\mathcal{Q}}
\renewcommand{\H}{\mathcal{H}}

\renewcommand{\S}{\mathcal{S}}
\newcommand{\R}{\mathcal{R}}
\newcommand{\V}{\mathcal{V}}
\newcommand{\W}{\mathcal{W}}
\newcommand{\val}{\mathsf{val}}
\newcommand{\REQ}{\texttt{REQ}}
\newcommand{\SYM}{\texttt{SYM}}
\newcommand{\PullCast}{\mathsf{PullCast}}
\newcommand{\Term}{\mathsf{Term}}

\title{Multivalued Consensus: General Adversaries Require More Communication}

\author{Mose Mizrahi}{ETH Zurich, Switzerland}{mmizrahi@ethz.ch}{https://orcid.org/0009-0009-9771-0845}{}
\author{Roger Wattenhofer}{ETH Zurich, Switzerland}{wattenhofer@ethz.ch}{https://orcid.org/0000-0002-6339-3134}{}
\authorrunning{M.\ Mizrahi and R.\ Wattenhofer}
\Copyright{Mose Mizrahi Erbes, and Roger Wattenhofer}
\ccsdesc[500]{Theory of computation~Communication complexity}
\ccsdesc[500]{Theory of computation~Distributed algorithms}
\ccsdesc[500]{Theory of computation~Cryptographic protocols}
\ccsdesc[500]{Security and privacy~Distributed systems security}
\keywords{Communication Complexity, Lower Bounds, Consensus, Byzantine Agreement, Reliable Broadcast, General Adversary Structures, Byzantine Faults, Omission Faults}
\EventEditors{Ioannis Chatzigiannakis, Andrea Vitaletti, Keren Censor-Hillel, and William K. Moses Jr.}
\EventNoEds{4}
\EventLongTitle{40th International Symposium on Distributed Computing (DISC 2026)}
\EventShortTitle{DISC 2026}
\EventAcronym{DISC}
\EventYear{2026}
\EventDate{November 9--13, 2026}
\EventLocation{Rome, Italy}
\EventLogo{}
\SeriesVolume{397}
\ArticleNo{43}

\begin{document}

\maketitle

\begin{abstract}

We study $n$-party fault-tolerant consensus against general (non-threshold) adversaries, which are characterized by adversary structures that enumerate every party set $Z$ such that the adversary can corrupt every party in $Z$. An adversary structure $\Z$ satisfies the $\Q^d$ condition if the complete party set $[n]$ is not the union of $d$ or fewer sets in $\Z$. Adversaries characterized by $\Q^d$-satisfying adversary structures generalize threshold adversaries which can corrupt at most $t < \frac{n}{d}$ parties.

For every $d \geq 1$, we describe an infinite family $\Z_\pj^{n,d}$ of $\Q^d$-satisfying $n$-party adversary structures based on finite projective geometry such that the byzantine adversaries characterized by this family cause error-free $R$-round protocols for interactive consistency on $L$-bit inputs (for agreement on every party's $L$-bit input) to require $\Omega(Ln^{2+1/d})$ bits of expected communication, if $L = \Omega(Rn^{-1/d} + 1)$. Likewise, $\Z_\pj^{n,d}$ causes byzantine agreement and broadcast to cost $\Omega(Ln^{1+1/d})$ bits, if $L = \Omega(Rn^{1-1/d})$. In every case, the lower bound is $\Omega(L_{\mathsf{out}} \cdot n^{1+1/d})$ bits, where $L_{\mathsf{out}}$ is the length of the output. This is in contrast to the case of threshold adversaries, for which the baseline $\Omega(L_{\mathsf{out}} \cdot n)$ ``every party must learn the output'' lower bound is tight for all large $L_{\mathsf{out}}$ if the adversary can corrupt $\Theta(n)$ parties.

The adversary structure $\Z_\pj^{n,d}$ also makes $\Omega(Ln^{1+1/d})$ bits of communication required for reliable broadcast and byzantine agreement in asynchronous networks, when $L = \Omega(n^{1-1/d})$. Similarly, there is a family $\Z_\tpj^{n,d}$ of $\Q^d$-satisfying adversary structures that cause asynchronous core set agreement to \linebreak cost $\Omega(Ln^{2+1/d})$ bits. These asynchronous complexity bounds hold against send-omission adversaries, even if the protocol uses cryptography. Their basis is that if a quorum of non-faulty parties agree on an output and terminate, then the messages they sent before terminating must suffice for the parties outside the quorum to also terminate with the same output. Surprisingly, if we do not require the parties to terminate (stop sending messages) after they output, then these lower bounds no longer hold. We show this by designing a non-terminating reliable broadcast protocol for general-omission adversaries that can for any parameter $\delta > 1$ be tuned to cost $(1 + \frac{1}{\delta - 1})Ln + O(\delta n^2\log(\delta n))$ bits, which is of independent interest. Lastly, we show how to obtain termination with $O(Ln^{1+1/d} + n^2\log n)$ bits (assuming the $\Q^d$ condition), and thus prove our asynchronous lower bounds tight for large $L$.

\end{abstract}

\section{Introduction}

We study agreement in a network of $n$ message-passing parties, where an adversary causes some of the parties to fail. We show that against general $\Q^d$ adversaries \cite{hm00}, agreement tasks often require $\Omega(n^{1/d})$ times more communication than they do against threshold adversaries. For example, with $L$-bit inputs, error-free (perfectly secure) byzantine broadcast \cite{lsp82} costs $\Omega(Ln^{1+1/d})$ bits. This exceeds the baseline $\Omega(Ln)$ ``everyone must learn the sender's input'' lower bound, which is tight for all sufficiently large $L$ against threshold adversaries \cite{chen21}.

The standard way to characterize the fault tolerance of an agreement protocol is to check how many faulty parties it can tolerate. That is, one considers a $t$-threshold adversary which can corrupt at most $t$ parties, and asks if the protocol can tolerate all such adversaries.

The optimal threshold fault tolerance for an agreement task is usually of the form $t < \frac{n}{d}$, where $d \geq 1$ is some integer. For example, the optimal fault tolerance for byzantine agreement (BA) \cite{lsp82} is $t < \frac{n}{2}$ or $t < \frac{n}{3}$, depending on the setting \cite{lsp82,ds83,toueg}. The reason is that when $t \geq \frac{n}{d}$, there are $d$ sets $Z_1, \dots, Z_d$ whose union is the full party set $\{1,\dots,n\} = [n]$ such that any of the sets $Z_i$ could be the set of corrupt parties. For BA, such $d$ sets must not exist.

The literature captures this notion with (general) adversary structures \cite{hm00}. An adversary structure $\Z$ is a down-closed family of subsets of $[n]$, and $\Z$ satisfies the $\Q^d$ condition if no $d$ sets in it cover $[n]$; that is, if $\bigcup_{i = 1}^d Z_i \neq [n]$ for all $Z_1, \dots, Z_d \in \Z$. We define a $\Z$-adversary to be one characterized by $\Z$, with the ability to corrupt a set of parties $Z$ iff $Z \in \Z$.

For some intuition as to why the $\Q^d$ condition is important, we consider a very simple synchronous protocol where 1) the parties (who all have inputs) send their inputs to everyone, and 2) each party $i$ outputs $y_i$ if it has received $y_i$ from every party except for a set of parties $S_i \in \Z$ (and outputs some default value $\bot$ if there exists no such $y_i$). If $\Z$ satisfies the $\Q^3$ condition, then the following holds: If any two correct (non-faulty) parties $i$ and $j$ respectively \linebreak output $y_i \neq \bot$ and $y_j \neq \bot$, then $y_i = y_j$. To see why, consider the following three sets in $\Z$.
\begin{itemize}
    \item $S_i$: the set of parties who did not send $i$ the input $y_i$. This set is in $\Z$ since $i$ output $y_i$.
    \item $S_j$: the set of parties who did not send $j$ the input $y_j$. This set is in $\Z$ since $j$ output $y_j$.
    \item $F$: the set of faulty parties, also in $\Z$.
\end{itemize}
By the $\Q^3$ condition, there exists a party $k$ outside $S_i \cup S_j \cup F$. This is a correct party who sent $y_i$ to $i$ and $y_j$ to $j$, which means that both $y_i$ and $y_j$ are equal to the input of party $k$.

The protocol described can also be shown to guarantee that if the correct parties have a common input $v$, then they all output $v$; hence, it solves a task known as crusader agreement \cite{d82,aw24}. It is a translation of a threshold-based protocol to the general adversary setting, where the change is that instead of checking if $|S_i| < \frac{n}{3}$, the parties check if $S_i \in \Z$. Translations like this are often possible, but they are not always so simple or efficient.

In this work, we study five fundamental agreement tasks: byzantine agreement \& broadcast \cite{lsp82}, interactive consistency \cite{lsp82}, reliable broadcast \cite{b87} and core set agreement \cite{bcg93}. For all of these tasks, feasibility against general byzantine adversaries is well understood. Authenticated byzantine broadcast (and thus interactive consistency) is always possible thanks to the Dolev-Strong protocol \cite{ds83}. For the remaining tasks, there is a correspondence: Each task we study optimally requires the $\Q^d$ condition in the general adversary setting if and only if it optimally requires $t < \frac{n}{d}$ in the threshold adversary setting. This correspondence has been established in a series of works, for byzantine and omission faults \cite{hm00, kf05, z10, c23, bz25, acc25}. Intuitively, a protocol against $t < \frac{n}{d}$ faults can usually be translated into a protocol against $\Q^d$ adversaries, and the impossibility of solving a task against $t \geq \frac{n}{d}$ faults can usually be translated into the impossibility of solving the task against non-$\Q^d$ adversaries.

While feasibility is well-understood, efficiency is not. General adversary tolerant agreement protocols are often less efficient. For example, synchronous binary byzantine agreement admits a deterministic error-free solution against $t < \frac{n}{3}$ that costs $O(n^2)$ bits of communication \cite{bgp92} based on a recursive divide-and-conquer construction which only works well against threshold adversaries, while against $\Q^3$ adversaries the most efficient deterministic error-free protocol we know of is the $O(n^3)$-bit one in \cite{bz25} based on the phase king paradigm \cite{bgp89}. Hence, we ask: \emph{Does agreement against general adversaries require more communication?} 

We are especially interested in how efficient general adversary tolerant agreement can be when the parties have long inputs. This question has not been studied much in the literature, perhaps since the traditional erasure/error correcting code techniques used to support long inputs efficiently (see for example \cite{fh06,nayak20,chen21,long22}) only work well against threshold adversaries. For instance, against $t < \frac{n}{3}$ omission faults, if we have a publicly known set $S$ of $n-t$ parties (up to $t$ faulty) who know of a common value $v$ and wish to inform the remaining $t$ parties of this value, then the following would be an efficient solution: \begin{enumerate}
    \item The parties in $S$ encode $v$ into $n-t$ symbols $s_1,\dots,s_{n-t}$ with an $(n-t, n-2t)$-erasure code, such that any $n-2t$ of these symbols can be decoded into $v$. With Reed-Solomon erasure coding \cite{reed-solomon}, each symbol is $\frac{L}{n-2t} + O(\log n) = O(\frac{L}{n} + \log n)$ bits long.
    \item For all $(i,j) \in [n-t] \times [t]$, the $i^\text{th}$ party in $S$ sends the symbol $s_i$ to the $j^{\text{th}}$ party outside $S$. In total, this costs $t(n-t) \cdot O(\frac{L}{n} + \log n) = O(Ln + n^2\log n)$ bits.
    \item Each party outside $S$ receives at least $n - 2t$ symbols, which is enough for it to learn $v$.
\end{enumerate}
If one translated this protocol to the general adversary setting, then the analogue of the value $n - 2t$ would be ``the minimum $k$ such that $S$ is guaranteed to contain at least $k$ correct parties.'' But unlike in the threshold adversary setting, where we always have $k = n - 2t > \frac{n}{3}$, we could have $k = O(1)$ in the general adversary setting. Then, using an $(|S|, k)$-erasure code would result in symbols of size $\frac{L}{k} + O(\log n)$ and thus a total communication complexity of $O(|S|(n-|S|)(\frac{L}{k} + \log n)) = O(Ln^2 + n^2\log n)$ bits instead of $O(Ln + n^2\log n)$. This begs the question of whether general adversaries fundamentally cause such cost increases.

Consider for instance byzantine broadcast, a standard primitive for synchronous networks where a sender party broadcasts its input in a way which guarantees that even if the sender is faulty, every correct party agrees on what the sender broadcast. If the sender has an $L$-bit input, then a byzantine broadcast protocol must (ignoring a technicality)\footnote{The technicality we are ignoring is the assumption that each party must receive $\Omega(L)$ bits in expectation to learn the sender's $L$-bit input. This is only true under certain conditions.} involve $\Omega(Ln)$ bits \linebreak of communication since each non-sender party must learn the sender's $L$-bit input.

Against byzantine threshold adversaries that can corrupt $O(n)$ parties, the ``$O(n)$ parties might have to learn an $L$-bit output'' $\Omega(Ln)$ lower bound is tight for all large $L$: See \cite{chen21,nayak20,aa25} for byzantine agreement\footnote{The bound for byzantine agreement is actually $\Omega(Lt)$ \cite{fh06}, but this simplifies to $\Omega(Ln)$ when $t = \Theta(n)$.} and broadcast, and \cite{long22} for reliable broadcast. There are some tasks (interactive consistency and core set agreement) which require $\Omega(Ln^2)$ bits of communication when the parties have $L$-bit inputs, but this is simply due to the fact that these tasks require each correct party must learn $O(n)$ parties' $L$-bit inputs. In other words, the lower bound is still $\Omega(L_\mathsf{out} \cdot n)$, where $L_\mathsf{out}$ is the size of the output that every party must learn. We show that some $\Q^d$ adversaries cause this lower bound to go up to $\Omega(L_\mathsf{out} \cdot n^{1+1/d})$. In other words, $\Q^d$ adversaries can fundamentally cause multivalued agreement to cost $\Omega(n^{1/d})$ times more bits of communication than threshold adversaries can.

While we primarily study general adversaries, we also have results for reliable broadcast with $L$-bit inputs against up to $t < n$ omission faults. We show that solving this task requires $\Omega(Ln^2)$ bits of communication if the parties must terminate (stop sending messages) after they output, which means that the echo-based protocol in \cite{ht94} is optimal, but the bit complexity $(1 + \frac{1}{\delta - 1})Ln + O(\delta n^2\log(\delta n))$ is feasible for any $\delta > 1$ if the parties are allowed to run forever. \linebreak This protocol shows that Locher's $(1.5 - o(1))Ln$ bit communication complexity lower bound for a class of reliable broadcast protocols \cite{locher24} does not extend to all non-terminating protocols.

\subsection{Contributions}

We show that against $\Q^d$ general adversaries, one can often improve the baseline $\Omega(L_\mathsf{out} \cdot n)$ lower bound to $\Omega(L_\mathsf{out} \cdot n^{1+1/d})$. We consider five very traditional and fundamental agreement tasks, defined below. Note that each of these tasks is characterized by some \emph{Validity} property which relates the parties' inputs with their outputs, the \emph{Agreement} property which requires the correct parties' outputs to match, and finally some \emph{Liveness} property which (except for reliable broadcast) requires every correct party to output from the protocol.

\subparagraph{Byzantine Agreement.} In byzantine agreement \cite{lsp82}, each party has an input, and the correct parties must agree on an output. Validity requires that if the correct parties have a common input, then they agree on it.

\subparagraph{Byzantine Broadcast.} In byzantine broadcast \cite{lsp82}, there is a publicly known sender party with an input, and the correct parties wish to agree on what they believe the sender's input is; correctly so if the sender is correct. That is, the correct parties must agree on an output $y$, and this output must be the sender's input if the sender is correct.

\subparagraph{Interactive Consistency.} In interactive consistency \cite{lsp82}, each party has an input, and the correct parties must agree on what they believe each party's input is. That is, they must agree on a vector $Y = (y_1, \dots, y_n)$ such that if a party $i$ is correct, then $y_i$ is its input. Interactive consistency is the parallel version of byzantine broadcast where everyone broadcasts an input.

\subparagraph{Reliable Broadcast.} Reliable Broadcast \cite{b87} is a weakening of byzantine broadcast tailored for asynchronous networks, where byzantine broadcast is impossible because the parties cannot \linebreak afford to wait indefinitely to learn the input of the sender whose messages might be arbitrarily delayed. Like byzantine broadcast, reliable broadcast requires that the correct parties' outputs match (agreement) and that they equal the sender's input when the sender is correct (validity). However, liveness for reliable broadcast only requires the correct parties to all output if either some correct party outputs (totality) or if the sender is correct.\footnote{In the literature, the guarantee that the correct parties all output if the sender is correct is sometimes bundled into validity. We separate this liveness guarantee from validity since validity being a pure safety property works better for our lower bounds.}

\subparagraph{Core Set Agreement.} Core Set agreement \cite{bcg93} is a weaker version of interactive consistency tailored for asynchronous networks, where interactive consistency is impossible due to the impossibility of waiting. Instead of a full vector that contains every party's input, each party $i$ must for a large core set of parties $C = \{c_1, \dots, c_{|C|}\}$ (whose complement is in the adversary structure $\Z$ characterizing the adversary) output a set $Y = \{(c_1, y_1), \dots, (c_{|C|}, y_{|C|})\}$ thus indicating its belief that each party $c_i \in C$ has the input $y_i$. Validity for core set agreement requires that no correct party's output set $Y$ contain any incorrect (sender, input) tuple for any correct sender $j$ (a tuple $(j, y)$ such that $j$ is correct but its input is not $y$), and agreement \linebreak requires both the core set $C$ and the output set $Y$ to be the same for every correct party.

\subparagraph{The Strength of the Guarantees.} Against omission faults, one can often achieve stronger guarantees than what we list above. For example, since an omission-faulty sender never lies about its input, a reliable broadcast protocol for omission faults can ensure that both the correct and the faulty parties can only output the sender's input from the protocol, no matter if the sender is correct or not. In Section \ref{pullcastsection}, we give a general-omission tolerant reliable broadcast protocol that achieves this guarantee. However, our lower bounds for send-omission \linebreak adversaries (the ones for the asynchronous setting) only require the definitions listed above.

\subparagraph{Lower Bounds.} We have five communication complexity lower bounds that use two adversary structure families $\Z_\pj^{n,d}$ and $\Z_\tpj^{n,d}$, which are based on $d$-dimensional finite projective geometry \cite{beutel}. They are parametrized by the number of parties $n$ and the least $d \geq 1$ such that the $\Q^d$ property is satisfied. For every $d \geq 1$, each one of $\Z_\pj^{n,d}$ and $\Z_\tpj^{n,d}$ exists for infinitely many $n$. \linebreak Hence, for every $d \geq 1$, our lower bounds hold for infinitely many $n$.\footnote{One can further prove that for every $d \geq 1$ there is some $N_d$ such that our bounds hold for all $n \geq N_d$. This is because our lower bounds monotonically grow with $n$, and because the ratio of the consecutive values of $n$ for which our adversary structures exist approaches $1$ as $n \rightarrow \infty$. We provide the details of this argument in the \hyperref[infinitely-many]{appendix}.} Our results hold for all $d \geq 1$; however, to prevent our lower bounds from being vacuous, we state them with the requirement ($d \geq 1$, $d \geq 2$ or $d \geq 3$) which makes the studied task possible.

We use expected communication complexity in our lower bounds. This expectation is over the parties' inputs (which follows a distribution of our choice), plus the parties' randomnesses assuming the protocol is randomized. We say that a protocol must involve $\Omega(f(L,n,d))$ bits of expected communication if there is an input distribution which induces this requirement. 

\apxthm{theorem}{interthm}{
    For all $d \geq 3$, all $R \geq 1$ and all $L = \Omega(Rn^{-1/d} + 1)$, any $R$-round randomized error-free synchronous interactive consistency protocol for $L$-bit inputs that can tolerate all byzantine $\Z_\pj^{n,d}$-adversaries must involve $\Omega(Ln^{2+1/d})$ bits of expected communication.
}

Theorem \ref{interthm} also implies the following lower bound for byzantine agreement and broadcast.

\apxthm{corollary}{babbcorr}{
    For all $d \geq 3$, $R \geq 1$ and $L = \Omega(Rn^{1-1/d})$, any $R$-round randomized error-free synchronous byzantine agreement or broadcast protocol for $L$-bit inputs that can tolerate all byzantine $\Z_\pj^{n,d}$-adversaries must involve $\Omega(Ln^{1+1/d})$ bits of expected communication.
}

Note that the restriction to error-free protocols is essential. By \cite{co18}, $O(Ln + \kappa \cdot \mathit{poly}(n))$ bits and $\mathit{poly}(n)$ rounds suffice for secure byzantine broadcast against any adversary structure if the protocol is allowed to fail with a probability negligible in $\kappa$, which means that Corollary \ref{babbcorr} (and thus Theorem \ref{interthm}) does not hold without the restriction to error-free protocols. Intuitively, this is because the bottleneck that causes the lower bounds to hold in the synchronous setting is the correct parties checking if their inputs (or what they believe the broadcaster's input is) are the same, and input hashing makes this cheap outside the error-free setting.

For byzantine broadcast, we get the $\Omega(Ln^{1+1/d})$ lower bound by just reducing interactive consistency to $n$ parallel instances of byzantine broadcast (one for each party to broadcast its input), and thus showing that interactive consistency is at most $n$ times more expensive than byzantine broadcast. In turn, the lower bound for byzantine broadcast implies the lower bound for byzantine agreement since one can reduce byzantine broadcast to byzantine agreement on \linebreak what the sender's input is, preceded by a single round where the sender sends its input to the other parties with $L(n-1) = o(Ln^{1+1/d})$ bits. We discuss the technicalities in Section \ref{syncsection}.

Our next lower bounds are for asynchronous protocols. They only require send-omission faults, and hold even if the protocol can fail to achieve output safety (agreement and validity) with a constant probability. Moreover, these lower bounds hold even if the adversary must be efficient and explicitly constructed, which means that one cannot circumvent them by using cryptography. However, the protocol must be terminating. That is, liveness should hold with \linebreak probability $1$, and the correct parties must terminate (stop sending messages) upon outputting.

\apxthm{theorem}{rbthm}{
    For all $d \geq 1$ and $L = \Omega(n^{1-1/d})$, a terminating reliable broadcast protocol for $L$-bit inputs that can tolerate all efficient send-omission $\Z_\pj^{n,d}$-adversaries must involve $\Omega(Ln^{1+1/d})$ bits of expected communication, if the protocol must achieve output safety with at least $0.5 + \varepsilon$ probability for any absolute constant $\varepsilon > 0$.
}

\apxthm{theorem}{abathm}{
    For all $d \geq 2$ and $L = \Omega(n^{1-1/d})$, a terminating asynchronous byzantine agreement protocol for $L$-bit inputs that can tolerate all efficient send-omission $\Z_\pj^{n,d}$-adversaries must involve $\Omega(Ln^{1+1/d})$ bits of expected communication, if the protocol must achieve output safety with at least $0.75 + \varepsilon$ probability for any absolute constant $\varepsilon > 0$.  
}

\apxthm{theorem}{acsthm}{
    For all $d \geq 2$ and all $L \geq 1$, a terminating asynchronous core set agreement protocol for $L$-bit inputs that can tolerate all efficient send-omission $\Z_\tpj^{n,d}$-adversaries must involve $\Omega(Ln^{2+1/d})$ bits of expected communication, if the protocol must achieve output safety with at least $0.875 + \varepsilon$ probability for any absolute constant $\varepsilon > 0$.  
}

\subparagraph{The Importance of Termination.} Our lower bounds for the asynchronous setting require termination. To show that this requirement is essential, in Section \ref{pullcastsection} we present an error-free reliable broadcast protocol for $L$-bit inputs which can for any parameter $\delta > 1$ be tuned to cost at most $(1 + \frac{1}{\delta-1})Ln + O(\delta n^2\log(\delta n))$ bits of communication and tolerates any general-omission \linebreak adversary. The caveat is that the parties cannot terminate it after they output, as the protocol uses a pull-based approach where the parties who know the sender's input $v$ must stay alive to serve other parties' dynamic requests for erasure-coded fragments of $v$.\footnote{It is the impossibility of such a dynamic approach that makes our lower bounds hold against terminating protocols. In a terminating protocol, we can partition the parties into two sets $U, U'$ such that $U' \in \Z$, and force the parties in $U$ to terminate without them ever receiving any messages from $U'$. The parties in $U$ will then be forced to blindly send their common output to the parties in $U'$, with the redundancy required for everyone in $U'$ to terminate.  If on the other hand the protocol were non-terminating, then the parties in $U$ could wait to hear from the parties in $U'$ and reply to them dynamically.} We then discuss how this approach can be extended for byzantine fault tolerance and why our other lower bounds for asynchronous byzantine agreement and core set agreement also require termination. In Section \ref{discuss}, we also give a simple crash fault tolerant terminating reliable broadcast protocol \linebreak that costs $L(n-1) + O(n^2)$ bits to show that our lower bounds for asynchronous send-omission tolerant protocols do not extend to crash fault tolerant protocols.\footnote{In the crash fault setting we assume no message clawbacks, which means that if a party sends a message, then the adversary must eventually deliver the message even if the party later crashes.} A consequence of our two reliable broadcast protocols is that Locher's $1.5(1 - o(1))Ln$ bit communication complexity lower bound for a certain class of reliable broadcast protocols that can tolerate $t < \frac{n}{3}$ crash faults \cite{locher24} does not extend to all non-terminating reliable broadcast protocols: The bound does not apply to all non-terminating protocols in the omission/byzantine fault settings, and it does not apply to all terminating protocols in the crash fault setting.

\subparagraph{The Requirements on $\bm L$.} Most of our lower bounds require minimum input lengths. This is because the parties can (ab)use silence to encode information into metadata. For example, in a synchronous network, the party set $[n - 1]$ can transmit a tuple $(r, j) \in [R] \times [n-1]$ to the party $n$ as follows: On round $r \in [R]$, the party $j \in [n-1]$ sends a single bit to $i$. This way, $i$ learns $O(\log R + \log n)$ bits of information while receiving only one bit, breaking the assumption we would like to make that to learn $L$ bits a party must receive $\Omega(L)$ bits in expectation. To make this assumption true, we lower bound $L$, and thus limit the effectiveness of silence. We note that this would not be required if we were to measure communication with entropy instead of bits sent. Less abstractly, we would not need lower bounds on $L$ if we assumed that each message is tagged with the pertinent metadata (a sender ID and a receiver ID, plus a round number for synchronous protocols). Adopting this tagged model would also make our lower bounds for synchronous protocols hold no matter the protocol's round complexity, even for Las Vegas style protocols with variable round complexities.

\subparagraph{Tightness.} When the adversary structure satisfies the $\Q^d$ condition, our lower bounds are of the form $\Omega(L_\mathsf{out} \cdot n^{1+1/d})$, where $L_\mathsf{out}$ is the output length. While we can only conjecture that this bound is tight in the synchronous setting, we know that it is tight in the asynchronous setting for all large $L$. We show this in Section \ref{termsection} via an error-free general-omission tolerant termination protocol (inspired by the protocols in \cite{nayak20}) that lets any asynchronous agreement protocol terminate with $O(Ln^{1+1/d} + n^2\log n)$ bits of additional communication, for any $\Q^d$-satisfying adversary structure $\Z$. To obtain this result, we find some non-empty party set $G \subseteq [n]$ such that $|G \setminus Z| \geq |G|n^{-1/d}$ for all $Z \in \Z$, which always exists by Proposition \ref{denseprop}. Then, we task each party in $G$ (which contains at least $|G|n^{-1/d}$ correct parties) with sending \linebreak everyone an erasure code symbol of size $O(\frac{Ln^{1/d}}{|G|} + \log n)$ such that any $|G|n^{-1/d}$ of the $|G|$ symbols suffice for everyone to learn the output and terminate.

\subparagraph{Other Agreement Tasks.} While we focus on five fundamental agreement tasks in the paper, our lower bounds generalize to other tasks, including ones with weaker agreement guarantees. In particular, our lower bounds for byzantine agreement also hold for a weaker variant of the task called adopt-commit \cite{gafni98}.\footnote{Adopt-commit is sometimes called graded consensus, but the latter can also refer to a slightly stronger primitive. We refer the reader to \cite{aw24} for modern definitions and a discussion of the nuances.} While for brevity we omit a formal proof of this, the intuitive reason is that an adopt-commit protocol must (like a byzantine agreement protocol) be such that if a quorum $S$ of correct parties (whose complement the adversary could corrupt) have a common input $v$, then every correct party outside $S$ learns $v$. This induces the same entropy lower bounds for adopt-commit and byzantine agreement. Alternatively, an easier argument that works for large $L$ is that one can reduce multivalued byzantine agreement to an instance of multivalued adopt-commit followed by an instance of binary byzantine agreement (with a cost not dependent on $L$), as Turpin and Coan show in \cite{tc84} with older terminology. For similar \linebreak reasons, our bound for byzantine broadcast also holds for its weaker variant gradecast \cite{fm88}.

\subsection{Technical Overview} \label{tech-over}

Below, we sketch the proof of our simplest lower bound: the $\Omega(Ln^{1+1/d})$ bit one for reliable broadcast (Theorem \ref{rbthm}). Afterwards, we will explain how our other lower bound proofs differ. To keep the sketch simple, let us only consider deterministic and error-free reliable broadcast.

For an adversary structure $\Z$, let the corresponding quorum structure be $\S = \{S \subseteq [n]:[n] \setminus S \in \Z\}$. When we run a protocol against a $\Z$-adversary, the guarantee we have is that the set of correct parties will always be in $\S$. Note that the quorum structure $\S$ is up-closed (with the maximum element $[n]$) since $\Z$ is down-closed, and the $\Q^d$ condition is equivalent to \linebreak the statement that the intersection of any $d$ sets in $\S$ must be non-empty.

We will define the adversary structure $\Z_\pj^{n,d}$ such that the corresponding quorum structure $\S_\pj^{n,d}$ has a subset $\S^*$ with the following properties: \begin{itemize}
    \item $\bigcup \S^*$ (the union of all sets in $\S^*$) and $[n] \setminus \bigcup \S^*$ both contain $\Theta(n)$ parties.
    \item Every party in $\bigcup \S^*$ belongs to $\Theta(n^{-1/d}|\S^*|)$ of the sets in $\S^*$.
\end{itemize}

Note that the properties above imply that the quorum sets in $\S^*$ are of size $O(n^{-1/d}|\S^*|)$ on average. For our lower bounds, we will get $\S^*$ via $d$-dimensional finite projective geometry \cite{beutel}, and each quorum in $\S^*$ will be a hyperplane of points, with each point representing a party. To get such a quorum structure, we must let the adversary corrupt $n(1 - o(1))$ parties. This is why our lower bounds require general adversaries.

Let $U = \bigcup \S^*$. Consider the execution of a terminating reliable broadcast protocol where the sender $s$ is in $U' = [n] \setminus U$ and has some uniformly random $L$-bit input $v$. Every party is correct, including the sender. Let us say that this execution happens in the world $W_v$.

In the world $W_v$, the adversary delays every message sent by the parties outside $U \cup \{s\}$. This way, it forces the parties in $U$ to terminate (with the output $v$ since this is the sender's input) without them ever receiving any message from the parties outside $U \cup \{s\}$. The parties in $U$ have to terminate without waiting to hear from the parties outside $U \cup \{s\}$ since as far as they know, the parties outside $U \cup \{s\}$ could all be silent faulty parties.

We argue that for every $S \in \S^*$ and every $i \in U' \setminus \{s\}$, the parties in $S$ must in expectation send $\Omega(L)$ bits to $i$ before terminating. This is because as far as the parties in $S$ know, it is possible that every party outside $S \cup \{i\}$ is faulty, and if so, then $i$ has to learn the uniformly random $L$-bit output $v$ of the parties in $S$ solely based on that messages that it receives from $S$. By an entropy argument, $i$ must receive $\Omega(L)$ bits in expectation from $S$ for this.

We have argued above that $\E[\sum_{j \in S}B_{j \rightarrow i}] = \Omega(L)$ for all $S \in \S^*$ and $i \in U' \setminus \{s\}$, where $B_{j \rightarrow i}$ is the number of bits that $j$ sends to $i$ in a random fault-free execution. Consequently, $\E[\sum_{S \in \S^*}\sum_{j \in S}\sum_{i \in U' \setminus \{s\}}B_{j \rightarrow i}] = \Omega(|\S^*|(|U'| - 1)L) = \Omega(|\S^*|Ln)$. Finally, we observe that a bit that a party in $U$ sends can only contribute $O(n^{-1/d}|\S^*|)$ to this $\Omega(|\S^*|Ln)$ expectation since every party in $U$ is in $\Theta(n^{-1/d}|\S^*|)$ of the sets $S \in \S^*$, which means that the protocol must involve $\Omega(\frac{|\S^*|Ln}{n^{-1/d}|\S^*|}) = \Omega(Ln^{1+1/d})$ bits of communication in expectation.

Our $\Omega(Ln^{1+1/d})$ lower bound for asynchronous byzantine agreement will involve a similar proof, with some more subtleties regarding the protocol's safety probability. The other lower \linebreak bounds (for core set agreement and interactive consistency) will be a bit harder to get. \begin{itemize}
    \item \textit{Core Set Agreement.} For core set agreement, a similar proof is sufficient to show that one needs $\Omega(L|C|n^{1+1/d})$ bits of expected communication, where $C$ is the agreed upon core set. To get the lower bound $\Omega(Ln^{2+1/d})$, we must also ensure that $|C| = \Omega(n)$. We will define \linebreak $\S_\pj^{n,d}$ such that all of its minimal quorums (core set candidates) are of size $\Theta(n^{1-d})$, which means that $\S_\pj^{n,d}$ will not suffice for the core set agreement lower bound. Hence, we will need \linebreak a more advanced quorum structure $\S_\tpj^{n,d}$ which still contains many minimal quorums of size $\Theta(n^{1-d})$, but also some of size $\Theta(n)$. In our proof, the adversary will initially only allow a minimal party quorum $C$ of size $\Theta(n)$ to communicate, and thus force $C$ to be the core set. The remaining steps will be similar to what we sketched for reliable broadcast.
    \item \textit{Interactive Consistency.} As we will study interactive consistency in synchronous networks, we cannot rely on termination. Instead, we will rely on the fact that if the parties do not communicate $\Omega(Ln^{2 + 1/d})$ bits in expectation, then one can find two worlds $W_V$ and $W_{V'}$ (both with no faults), a set $S \in \S^*$ and a party $i \in [n] \setminus \bigcup \S^*$ such that the bidirectional communication between $S$ and $i$ is the same in both worlds even though some party $j$ has a different input in the two worlds. Hence, by corrupting every party outside $S \cup \{i\}$, the adversary can fool the parties in $S$ into thinking they are in $W_V$ while fooling $i$ into thinking it is in $W_{V'}$, causing $S$ and $i$ to disagree on $j$'s input. Since this strategy requires the faulty parties to perfectly simulate how they would have behaved as correct parties in different worlds, the faulty parties must be byzantine and the protocol must be error-free.
\end{itemize}

\section{Related Work}

The literature on consensus is mostly centered around $t$-threshold adversaries that can corrupt $t$ parties. In this setting, the communication complexity lower bounds for agreement on long inputs are usually simple, and the challenge lies in matching them with as little overhead as possible. This challenge began with \cite{fh06}, which proved the lower bounds $\Omega(Ln)$ for byzantine broadcast and $\Omega(Lt)$ for byzantine agreement, and solved these tasks with $O(n^3\kappa)$ bits against $t < \frac{n}{3}$ faults in a synchronous network (where $\kappa$ is a security parameter). Later works have designed deterministic error-free protocols that cost $O(Ln + n^2\log n)$ bits for synchronous byzantine agreement and broadcast \cite{chen21}, for reliable broadcast \cite{long22}, and for extending binary asynchronous byzantine agreement protocols into multivalued ones \cite{aa25}, all with the optimal byzantine fault tolerance $t < \frac{n}{3}$. Protocols that support long inputs efficiently also exist for core set agreement \cite{dumbo}, as well as for authenticated byzantine agreement and interactive consistency against $t < \frac{n}{2}$ faults \cite{nayak20}. It is also known that there exist authenticated byzantine broadcast protocols for $t < n$ faults that cost $O(Ln + \kappa \cdot \mathit{poly}(n))$ bits of communication \cite{co18}.

The threshold adversary setting is well-known for its quadratic message complexity lower bounds, such as the ones in \cite{dr85,a19}. Most famously, Dolev and Reischuk prove in their seminal \linebreak work \cite{dr85} that a synchronous deterministic byzantine broadcast protocol which can tolerate $t$ byzantine faults must involve $\Omega(n + t^2)$ messages. Informally, the Dolev-Reischuk argument \cite{dr85} for deterministic protocols is that if a party $i$ normally receives $o(t)$ messages in fault-free executions, then, in an execution with faults, the adversary could corrupt every party who would normally send $i$ a message, and thus force $i$ to output without it receiving any messages. Curiously, the adversary structures we use in this work are not conducive to this argument, as they do not allow the adversary to corrupt any $t$ parties it chooses. We suspect that these adversary structures thus admit byzantine broadcast protocols that cost $o(n^2)$ messages/bits, though we leave it for future work to show this.

The lower bound most like ours is the one in \cite{locher24} by Locher, which was influential for our work. In \cite{locher24}, Locher proves a communication complexity lower bound of $(\frac{Ln^2}{n-t})(1 - o(1))$ bits (or $(1.5 - o(1))Ln$ bits when $n = 3t + 1$) for $3$-round crash fault tolerant reliable broadcast protocols where the sender sends $o(Ln)$ bits in the first round. The bound relies on a double counting argument comparable to the one we use, and it is tight for the considered class of protocols \cite{ls25,ls26}. Locher conjectured in \cite{locher24} that his bound might extend to reliable broadcast \linebreak protocols outside the class of protocols he considered. We show that it does not extend to all reliable broadcast protocols: The omission tolerant non-terminating reliable broadcast protocol in Section \ref{pullcastsection} is not subject to it, and neither is a crash fault tolerant terminating reliable broadcast protocol that we describe in Section \ref{discuss}.

The fact that termination is the bottleneck in our asynchronous lower bounds rather than the rest of the protocol is noteworthy, given that termination is usually an easy-to-achieve goal handled with a simple protocol (e.g.\ by reliable agreement in \cite{ddlmrs24}). And yet, with respect to the lower bound, it turns out that termination is the main challenge. This reminds us of a recent work on reliable broadcast \cite{mw24}, where the authors define a reliable broadcast variant with very strong termination guarantees that can to some extent tolerate the correct parties prematurely quitting the protocol, but then they fail to give a protocol which satisfies these guarantees with $O(Ln + \dots)$ bits of communication. We are also reminded of the recent work \cite{lewis2026}, which shows for synchronous binary byzantine agreement against $t < n(1/3 - \varepsilon)$ byzantine faults that a deterministic protocol for this task can make all but $\varepsilon n$ of the correct parties learn the output (and thus achieve univalency) with $O(n\log n)$ bits of communication, and the part that costs $\Theta(n^2)$ bits is making the remaining parties learn the output.

Without faults, $n$ parties can compute any function of their inputs with $O(Ln^2)$ bits of communication, since this is the complexity of a single all-to-all input exchange. With faults, $O(Ln^2)$ bits can be insufficient, as our work shows. To our knowledge, this was not known before for any task that does not require the parties to keep their inputs secret. 

A secrecy-involving task which can require a massive amount of communication is secret sharing, where each party acquires a share of a secret such that the parties can reconstruct the secret iff a sufficient quorum of parties combine their shares. It has been conjectured that general adversaries can require some parties to learn shares which are $2^{\Theta(n)}$ times larger than the secret \cite{beimel}.\footnote{This conjecture has been proven for certain kinds of secret sharing schemes \cite{linearsss,ls20}.} Accordingly, general adversary tolerant protocols based on secret sharing such as the $\Q^3$ adversary tolerant asynchronous byzantine agreement protocol in \cite{c23} can have exponential-in-$n$ communication complexities. Note that secret sharing has efficient solutions against $t$-threshold adversaries, where any $t + 1$ shares suffice for secret reconstruction \cite{shamir}.

Another secrecy-involving task related to consensus is secure multi-party computation (MPC), where the parties must compute a function of their inputs while keeping their inputs secret. MPC's input secrecy has been used to prove communication complexity lower bounds for it; see for example \cite{dls21}. In this work, there is no input secrecy. Instead, we only use the idea that a party has to receive $\Omega(L_\mathsf{out})$ bits from correct parties to learn an $L_\mathsf{out}$-bit output.

The adversary structures $\Z_\pj^{n,d}$ and $\Z_\tpj^{n,d}$ we use in this paper are based on finite projective geometry. In particular, for $\Z_\pj^{n,d}$, the family of quorum sets $\S^*$ which we used in Section \ref{tech-over} is such that each party in $\bigcup \S^*$ is identified by a unique point in the $d$-dimensional projective geometry $PG(d,q)$ of order $q$ \cite{beutel}, where $q = \Theta(n^{1/d})$ is a prime power, and each quorum set in $\S^*$ is a hyperplane of parties in $PG(d,q)$. This quorum structure is inspired by Maekawa \cite{maekawa}, who used the lines of $PG(2,q)$ as quorums for distributed mutual exclusion. For all the facts about $PG(d,q)$ we will use in this work, we refer the reader to the first chapter of \cite{beutel}.

\section{Model}

We assume a network of $n$ parties with the IDs $\{1,\dots,n\} = [n]$ which are pairwise connected to each other via reliable and authenticated channels. An adversary characterized by an adversary structure $\Z$ (a down-closed family of subsets of $[n]$) corrupts a set of parties $Z \in \Z$, where $\Z$ satisfies the $\Q^d$ condition for any $d \geq 1$ if no $d$ or fewer sets in $\Z$ have the union $[n]$. A corrupted party may be byzantine, and thus behave arbitrarily, or it may be omission faulty. An omission faulty party conforms to the prescribed protocol; however, the adversary can drop any message that the party sends (if the party is send-omission faulty), as well as any message sent to the party (if the party is general-omission faulty). 

Our lower bounds hold for static adversaries, which must choose the set $Z \in \Z$ of corrupt parties before the parties start running the protocol. However, our protocols in Section \ref{pullcastsection} and Section \ref{termsection} can tolerate adaptive adversaries which can corrupt parties during protocol execution, depending on the information they gather while the protocol runs.

The network may be synchronous or asynchronous. In a synchronous network, a protocol proceeds in synchronous rounds $1,2,\dots$. Each round consists of three steps where each party \linebreak sends messages to the other parties, receives the messages from the other parties, and updates its state based on the messages it received. An $R$-round synchronous protocol consists of $R$ rounds $1,2,\dots,R$, and it ends with each party outputting a value at the end of round $R$.

In asynchronous networks, on the other hand, there are no rounds, and a party only sends \linebreak messages in response to beginning a protocol or receiving a message. The adversary controls the message scheduling, which means that it can arbitrarily delay or reorder all sent messages; however, it is required to eventually deliver every message whose sender is correct.

Our lower bounds apply to randomized protocols.\footnote{We note that our lower bounds apply to deterministic protocols as well, since deterministic protocols are a special case of randomized ones.} We assume that there exists a vector of $n$ randomness tapes $\R = (\R_1, \dots, \R_n)$, and each party $i$ has private access to $\R_i$. Each tape $\R_i \in \{0,1\}^\omega$ is an infinite string of bits. In the synchronous setting, we assume that the tapes are mutually independent strings of i.i.d.\ uniformly random bits, while we allow the strings to be non-uniform and correlated in the asynchronous model. A correct party $i$ acts deterministically based on its local view (on its input if it has any, on $\R_i$ and on the messages it has received so far). The tape vector $\R$ is independent of the parties' inputs.

To keep our model simple, we omit explicit setup parameters such as a PKI (public key infrastructure) or a CRS (common random string) from it. However, our lower bounds also hold for protocols that use finite setups. For error-free synchronous protocols, this is because the protocol must stay error-free for any fixing of the joint setup, and conditioning on a fixed setup reduces the analysis to the setup-free case. As for asynchronous protocols, the parties' possibly non-uniform and correlated randomness tapes already subsume any joint setup.

Our lower bounds are for the expected communication complexity of a protocol's fault-free executions. We express this expected complexity by $\E[\sum_{i \in [n]}\sum_{j \in [n] \setminus \{i\}}B_{i\rightarrow j}]$, where $B_{i \rightarrow j}$ represents the total number of bits that the party $i$  sends to the party $j$ in a random execution. This expectation is over the input assignment distribution $\V$ (which assigns a private input to each party who should have one), plus the joint randomness tapes $\R$ of the parties.\footnote{In the terminology of communication complexity, our communication complexity model is known as the number-in-hand (NIH) message-passing model \cite{pvz12}.}

We say that an agreement protocol achieves output safety with probability $p$ if it achieves validity and agreement with at least this probability. Formally, for any input distribution, the probability that either any party obtains an output which violates the protocol's validity guarantees or any two correct parties obtain distinct outputs must be at most $1 - p$.

In our lower bounds, we use the variable $T_{S \rightarrow i}$ to refer to the transcript of communication between a set of parties $S$ to a party $i$. The arrow denotes the direction of communication: $T_{S \rightarrow i}$ only records the communication that $i$ receives from $S$ (with all the pertinent metadata which $i$ may use in the protocol), while $T_{S \leftrightarrow i}$ also records the communication that the parties in $S$ receive from $i$ (again with the pertinent metadata). The idea is that if the transcript $T$ between $S$ and $i$ has a sufficiently large Shannon entropy $H(T)$, then we can transform the transcript entropy lower bound into a communication complexity one by doing the following: \begin{enumerate}
    \item encode $T$ into an array $A$, such that the array's cells are filled with the raw messages the parties send each other while the cell indices implicitly carry the metadata,
    \item argue that the cross communication between $S$ and $i$ must be $\Omega(H(T))$ bits in expectation because the array's cells must in total contain $\Omega(H(T))$ bits in expectation. 
\end{enumerate}
With the transcript encoding schemes we choose, $H(T)$ is sufficiently large if $H(T) = \Omega(R|S|)$ (for an $R$-round synchronous protocol), or if $H(T) = \Omega(|S|)$ (for an asynchronous protocol).\footnote{If the protocol is asynchronous, then we make assumptions regarding the adversary's message scheduling (which we respect in our lower bound proofs) to reduce the encoding overhead.} These lower bounds on $H(T)$ ensure that the information content of the bits sent dominate the information content of metadata. We discuss the technicalities in the \hyperref[apxsec]{appendix}.

\section{Adversary Structures} \label{advstructures}

\begin{figure}[ht]
    \centering
    \begin{tikzpicture}[scale=0.98]
        \coordinate (N1) at (0, 0);
        \coordinate (N3) at (2, {2*sqrt(3)});
        \coordinate (N5) at (4, 0);
        \coordinate (N2) at (1, {sqrt(3)});
        \coordinate (N4) at (3, {sqrt(3)});
        \coordinate (N6) at (2, 0);
        \coordinate (N7) at (2, {sqrt(3)/1.5});
        \draw[thick] (N1) -- (N4);
        \draw[thick] (N5) -- (N2);
        \draw[thick] (N3) -- (N6);
        \draw[thick] (N1) -- (N3) -- (N5) -- cycle;
        \draw[thick, red] (N4) arc[start angle=30, end angle=270, radius={sqrt(3)/1.5}];
        \tikzset{fano node/.style={
            circle, draw=black, fill=white, thick, inner sep=0pt, minimum size=18pt, font=\bfseries
        }}
        \foreach \i in {1,...,7}
            \node[fano node] at (N\i) {\i};
    \end{tikzpicture}
    \caption{Fano's Plane, a depiction of $PG(2,2)$. The hyperplanes (lines) of $PG(2,2)$ are the circular arc $\{4,6,2\}$ and the $6$ collinearly drawn triples $\{1,2,3\},\{3,4,5\},\{5,6,1\},\{2,7,5\},\{4,7,1\},\{6,7,3\}$. In $\Z_\pj^{14,2}$, the adversary can choose any of these $7$ triples, and corrupt every party in $[n] = \{1,\dots,14\}$ excluding the parties in the triple. We note that the labels of the parties/nodes are arbitrary.}
    \label{fig:placeholder}
\end{figure}

Now, we define the adversary structure families $\Z_\pj^{n,d}$ and $\Z_\tpj^{n,d}$. The simpler structure $\Z_\pj^{n,d}$ exists for any $d \geq 1$ if $n = 2\sum_{k = 0}^dq^k = 2(q^{d+1} - 1)/(q - 1)$ for a prime power $q = \Theta(n^{1/d})$. To define it, we partition the parties into the sets $U = [\frac{n}{2}]$ and $U' = [n] \setminus U$. We bijectively label each party $i \in U$ with a point $p(i)$ in the $d$-dimensional finite projective geometry $PG(d,q)$ of order $q$. Finally, we define the quorum structure $\S_\pj^{n,d} = \{S \subseteq [n] : [n] \setminus S \in \Z_\pj^{n,d}\}$ that corresponds to $\Z_\pj^{n,d}$ so that for any party set $S \subseteq [n]$, $S \in \S_\pj^{n,d}$ iff there exists a hyperplane $\H$ of $PG(d,q)$ such that $S$ is a superset of the minimal quorum $S_\H = \{i \in U : p(i) \in \H\} \in \S_\pj^{n,d}$. In other words, we allow the adversary to do the following:
\begin{enumerate}
    \item pick any hyperplane $\H$ out of the $(q^{d+1} - 1)/(q - 1)$ hyperplanes of $PG(d,q)$,
    \item corrupt any subset of the parties in \{$i \in U:p(i) \not \in \H\} \cup U'$.
\end{enumerate}

This adversary structure satisfies the $\Q^d$ condition since the intersection of $d$ hyperplanes is never empty. Moreover, for each party $i \in U$, the proportion of quorums in $\S_\pj^{n,d}$ which contain $i$ is exactly $(q^d - 1)/(q^{d + 1} - 1) = \Theta(n^{-1/d})$, by the incidence structure of $PG(d,q)$.

The adversary structure $\Z_\tpj^{n,d}$ is somewhat more complicated. It exists for any $d \geq 1$ if $n = 3(q^{d+1} - 1)/(q - 1)$ for a prime power $q$. This time, we partition the party set $[n]$ into three sets: $U_1 = \{1,\dots,\frac{n}{3}\}, U_2 = \{\frac{n}{3}+1,\dots,\frac{2n}{3}\}$ and $U' = \{\frac{2n}{3}+1,\dots,n\}$. For every $i \in [\frac{n}{3}]$, we label the party $i \in U_1$ and the party $i + \frac{n}{3} \in U_2$ with some unique point $p(i) = p(i + \frac{n}{3})$ in $PG(d,q)$. The adversary can once again corrupt any of the parties in $U'$. Plus, it can do the following twice: pick a hyperplane, and corrupt every party in one of $U_1$ or $U_2$ whose label is not incident to this hyperplane. In other words, the adversary can pick any two (possibly identical) hyperplanes $\H,\H'$ of $PG(d,q)$ and do one of the following:
\begin{itemize}
    \item corrupt any subset of the parties in $\{i \in U_1:p(i) \not \in \H \cap \H'\} \cup U'$,
    \item corrupt any subset of the parties in $\{i \in U_1:p(i) \not \in \H\} \cup \{i \in U_2:p(i) \not \in \H'\} \cup U'$,
    \item corrupt any subset of the parties in $\{i \in U_2:p(i) \not \in \H \cap \H'\} \cup U'$.
\end{itemize}

Let us say that the adversary hits $U_1$ or $U_2$ if it picks a hyperplane $\H$ and corrupts some of the parties in $U_1$ or $U_2$ whose labels are not incident on $\H$. The adversary can perform two hits,  either twice on one of $U_1$ or $U_2$ or once on each. The adversary structure $\Z_\tpj^{n,d}$ satisfies the $\Q^d$ condition because taking the union of $d$ maximal adversary sets means performing $2d$ hits, which is not enough to corrupt every party. The reason for this is that the hyperplanes of \linebreak $PG(d,q)$ form a $d$-wise intersecting family, which means that the adversary must hit both $U_1$ and $U_2$ at least $d + 1$ times (i.e.\ perform $2d + 2$ hits) to fully corrupt both $U_1$ and $U_2$.

The quorum structure $\S_\tpj^{n,d}$ corresponding to $\Z_\tpj^{n,d}$ has a substructure $\S^*$ which contains one quorum $S_\H$ for each hyperplane $\H$ of $PG(d,q)$, where $S_\H = \{i \in U_1 \cup U_2:p(i) \in \H\}$. By the incidence structure of $PG(d,q)$, the proportion of quorums in $\S^*$ that contain each party $i \in U_1 \cup U_2$ is $(q^d - 1)/(q^{d+1} - 1) = \Theta(n^{-1/d})$.

The structure $\S_\tpj^{n,d}$ also has a minimal quorum $C$ of size at least $(q^{d+1} - 1)/(q-1) \geq \frac{n}{3}$, which can be obtained by choosing two distinct hyperplanes $\H$ and $\H'$ of $PG(d,q)$ and letting $C = U_1 \cup \{i \in U_2:p(i) \in \H \cap \H'\}$. If $d = 1$, then $C = U_1$, and otherwise, $C$ is a collection of $\frac{q^{d+1} - 1}{q-1} + \frac{q^{d-1} - 1}{q - 1}$ parties, where the latter term $\frac{q^{d-1} - 1}{q - 1} = \sum_{k=0}^{d-2}q^k$ is the number of points that any $(d-2)$-dimensional subspace of $PG(d,q)$ contains.

\section{Synchronous Lower Bounds} \label{syncsection}

We begin with a combinatorial lemma which relates the expected amount of communication that takes place between many (quorum, party) pairs in a protocol with the protocol's total expected communication complexity. We will use this lemma throughout the paper.

\begin{lemma} \label{comb-lemma}
    Let $\S$ be a family of party sets such that each party in $U = \bigcup \S$ is in at most $Q = O(n^{-1/d}|\S|)$ of the sets in $\S$, for any $d \geq 1$. Suppose there exists some $E > 0$ such that a protocol's random fault-free executions satisfy $\E[\sum_{j \in S}(B_{j \rightarrow i} + B_{i \rightarrow j})] \geq E$ for all $S \in \S$ and all parties $i$ in a set $U' \subseteq [n] \setminus U$ of size $\Theta(n)$. Then, the random fault-free executions of the protocol involve $\Omega(En^{1+1/d})$ bits of total expected communication.
\end{lemma}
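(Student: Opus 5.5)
Our plan is a double counting argument. We sum the assumed per-pair lower bound over all pairs $(S,i) \in \S \times U'$, then bound from above how much any single bit can contribute to this sum.

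\textbf{Summing the hypothesis.} Summing over all $S \in \S$ and $i \in U'$ and using linearity of expectation gives
\[
\E\Big[\sum_{S \in \S}\sum_{i \in U'}\sum_{j \in S}(B_{j \rightarrow i} + B_{i \rightarrow j})\Big] \geq E\,|\S|\,|U'| = \Omega(E\,|\S|\,n),
\]
since $|U'| = \Theta(n)$.

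\textbf{Regrouping by pairs.} We reorder the sum by the pair $(j,i) \in U \times U'$. Each term $B_{j \rightarrow i} + B_{i \rightarrow j}$ appears once for every $S \in \S$ with $j \in S$. By hypothesis, this number is at most $Q$. Also $i \neq j$, because $U' \cap U = \emptyset$. Hence the left-hand side is at most
\[
Q \cdot \E\Big[\sum_{j \in U}\sum_{i \in U'}(B_{j \rightarrow i} + B_{i \rightarrow j})\Big] \leq Q \cdot \E\Big[\sum_{a \in [n]}\sum_{b \in [n]\setminus\{a\}} B_{a \rightarrow b}\Big].
\]
The last inequality holds because every ordered pair $(a,b)$ with $a \neq b$ is counted at most once among the terms $B_{j\rightarrow i}$ and $B_{i\rightarrow j}$ with $j\in U$, $i\in U'$. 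This uses the disjointness of $U$ and $U'$, so the pairs $(j,i)$ and $(i,j)$ never collide.

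\textbf{Concluding.} Combining the two bounds, the total expected communication is at least $\Omega(E|\S|n)/Q$. Substituting $Q = O(n^{-1/d}|\S|)$ yields $\Omega(E n^{1+1/d})$. The factor $|\S|$ cancels, which is why the lemma needs only a relative bound on how many sets contain each party.

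\textbf{Expected difficulty.} No step is seriously difficult. The only point needing care is the bookkeeping in the regrouping: each bit must be attributed to a single ordered pair and counted at most $Q$ times. This in turn relies on $U' \subseteq [n]\setminus U$, so that the party $i$ is never itself a member of any $S$ and bits between $i$ and $j$ are not double counted.
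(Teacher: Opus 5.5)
Your proposal is correct and matches the paper's proof essentially step for step. Both sum the per-pair bound over $\S \times U'$, regroup by pairs $(j,i) \in U \times U'$ with each term counted at most $Q$ times, and divide by $Q = O(n^{-1/d}|\S|)$ to get $\Omega(En^{1+1/d})$. Your explicit remark that $U \cap U' = \emptyset$ keeps the ordered pairs from colliding is a small clarification that the paper leaves implicit.
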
 

\begin{proof}
    The proof is by double counting. The expected fault-free communication complexity of the protocol is $\E[\sum_{j \in [n]}\sum_{i \in [n] \setminus \{j\}}B_{j \rightarrow i}]$, which is at least $\E[\sum_{j \in U}\sum_{i \in U'}(B_{j \rightarrow i} + B_{i \rightarrow j})]$.
    
    Due to the linearity of expectation, summing $\E[\sum_{j \in S}(B_{j \rightarrow i} + B_{i \rightarrow j})] \geq E$ up over every $S \in \S$ and $i \in U'$ results in $\E[\sum_{S \in \S}\sum_{j \in S}\sum_{i \in U'}(B_{j \rightarrow i} + B_{i \rightarrow j})] \geq E|\S||U'| = \Omega(En|\S|)$. We have $\sum_{S \in \S}\sum_{j \in S}\sum_{i \in U'}(B_{j \rightarrow i} + B_{i \rightarrow j}) \leq Q(\sum_{j \in U}\sum_{i \in U'}(B_{j \rightarrow i} + B_{i \rightarrow j}))$ since for all $j \in U$ and $i \in U'$, the term $(B_{j \rightarrow i} + B_{i \rightarrow j})$ appears in the first sum at most $Q$ times (once for each $S \in \S$ that contains $j$). Therefore, the expected communication complexity of the protocol is lower bounded by $\E[\sum_{j \in U}\sum_{i \in U'}(B_{j \rightarrow i} + B_{i \rightarrow j})] \geq \frac{1}{Q} \cdot \E[\sum_{S \in \S}\sum_{j \in S}\sum_{i \in U'}(B_{j \rightarrow i} + B_{i \rightarrow j})] \linebreak = \Omega(\frac{En|\S|}{Q}) = \Omega(\frac{En|\S|}{n^{-1/d}|\S|}) = \Omega(En^{1+1/d})$.
\end{proof}

Now, we are ready to prove the lower bound for interactive consistency.

\interthm

\begin{proof}
    Recall that $\Z_\pj^{n,d}$ partitions the parties into two sets $U, U'$ of size $\frac{n}{2}$, where a quorum is a hyperplane of $U$. We consider the input distribution $\V$ which assigns every party an i.i.d.\ uniformly random $L$-bit input. This distribution has $Ln$ bits of entropy.

    Fix any minimal quorum $S_\H$ of $\S_\pj^{n,d}$ (obtained by picking the parties in $U$ whose labels are incident to the hyperplane $\H$ of $PG(d,q)$), and fix a party $i \in U'$. Consider the random transcript $T_{S_\H \leftrightarrow i}$ of the bidirectional communication between $S_\H$ and $i$ in a fault-free execution, which is randomized by the input distribution $\V$ and party randomnesses $\R$. We claim that $H(T_{S_\H \leftrightarrow i}) \geq H(\V) = Ln$. Assume this is correct for now. By the assumption $L = \Omega(Rn^{-1/d})$, we have $H(T_{S_\H \leftrightarrow i}) = \Omega(Rn^{1-1/d})$, which means by Lemma \ref{sync-cov-lemma} (proven in the appendix) that $\E_{\V,\R}[\sum_{j \in S_\H}(B_{j \rightarrow i} + B_{i \rightarrow j})] = \Omega(H(T_{S_\H \leftrightarrow i})) = \Omega(Ln)$. Hence, by Lemma \ref{comb-lemma}, the protocol involves $\Omega(Ln^{2+1/d})$ bits of expected communication.    

    What remains to prove is that the claim $H(T_{S_\H \leftrightarrow i}) \geq H(\V)$ is true. Suppose not. Then, since $H(\V) > H(T_{S_\H \leftrightarrow i})$, we have $H(V \mid T_{S_\H \leftrightarrow i}) \geq H(\V) - H(T_{S_\H \leftrightarrow i}) > 0$. The conditional entropy $H(V \mid T_{S_\H \leftrightarrow i})$ equals $\sum_{\tau_{S_\H \leftrightarrow i}} \Pr[T_{S_\H \leftrightarrow i} = \tau_{S_\H \leftrightarrow i}] \cdot H(\V \mid T_{S_\H \leftrightarrow i} = \tau_{S_\H \leftrightarrow i})$, where $\tau_{S_\H \leftrightarrow i}$ refers to any fixed finite $R$-round transcript which $T_{S_\H \leftrightarrow i}$ might equal, and this sum is positive. So, there must be some $\tau_{S_\H \leftrightarrow i}$ such that $\Pr[T_{S_\H \leftrightarrow i} = \tau_{S_\H \leftrightarrow i}] \cdot H(\V \mid T_{S_\H \leftrightarrow i} = \tau_{S_\H \leftrightarrow i})$ is positive. In particular, the factor $H(\V \mid T_{S_\H \leftrightarrow i} = \tau_{S_\H \leftrightarrow i})$ must be positive. For this, there must exist two distinct $V, V'$ in the support of $\V$ such that $\Pr[\V = V \mid T_{S_\H \leftrightarrow i} = \tau_{S_\H \leftrightarrow i}] > 0$ and $\Pr[\V = V' \mid T_{S_\H \leftrightarrow i} = \tau_{S_\H \leftrightarrow i}] > 0$. In the rest of our proof below, we use these two input assignments $V,V'$ and the fixed transcript $\tau_{S_\H \leftrightarrow i}$ to show that the protocol is not error-free. This contradiction means that the assumption $H(T_{S_\H \leftrightarrow i}) < H(\V)$ must be false.

    Observe that $\Pr[\V = V \mid T_{S_\H \leftrightarrow i} = \tau_{S_\H \leftrightarrow i}] > 0$ implies $\Pr[T_{S_\H \leftrightarrow i} = \tau_{S_\H \leftrightarrow i} \mid \V = V] = p$ for some $p > 0$ by the definition of conditional probability and by Bayes' Theorem. Similarly, $\Pr[\V = V' \mid T_{S_\H \leftrightarrow i} = \tau_{S_\H \leftrightarrow i}] > 0$ implies $\Pr[T_{S_\H \leftrightarrow i} = \tau_{S_\H \leftrightarrow i} \mid \V = V'] = q$ for some $q > 0$.

    Let $\W_V$ (resp.\ $\W_{V'}$) be the set of all worlds where a fault-free execution with the input assignment $V$ (resp.\ $V'$) takes place and the transcript between $S_\H$ and $i$ is $\tau_{S_\H \leftrightarrow i}$. Interactive consistency requires the parties to all output $V$ in the worlds in $\W_V$ and $V'$ in the worlds in $\W_{V'}$. Let us define a hybrid world set $\W_\mathsf{Hyb}$ where the parties in $S_\H \cup \{i\}$ are correct while the rest of the parties are byzantine, $i$'s input is assigned in accordance with $V$, and the parties in $S_\H$ have their inputs assigned in accordance with $V'$. Every fixed world $W_{\mathsf{Hyb}} \in \W_\mathsf{Hyb}$ is such that $i$'s randomness $\R_i$ in $W_\mathsf{Hyb}$ is the same as it is in some fixed world $W_V \in \W_V$, and similarly the randomnesses $\{\R_s:s \in S_\H\}$ of the parties in $S_\H$ are the same as they are in some fixed world $W_{V'} \in \W_{V'}$. Below, we describe a cut-and-paste attack the adversary can perform in the world $W_\mathsf{Hyb}$ to cause $i$ to output $V$ in $W_\mathsf{Hyb}$ while causing the parties in $S_\H$ to output $V'$, with this breaking the agreement property. Note that the adversary can perform this sort of attack with at least $p \cdot q > 0$ probability because $i$ has a randomness compatible with $\W_V$ (a randomness that $i$ has in some fault-free world with the input assignment $V$ where $T_{S_\H \leftrightarrow i} = \tau_{S_\H \leftrightarrow i}$) with at least $p$ probability, and independently of this the parties in $S_\H$ have a joint randomness compatible with $\W_{V'}$ with at least $q$ probability. 

    In $W_{\mathsf{Hyb}}$, the byzantine parties simulate $W_V$ towards $i$ and simulate $W_{V'}$ towards $S_\H$. That is, each byzantine party $j \not \in S_\H \cup \{i\}$ acts towards $i$ as if $j$ were a correct party in the world $W_V$ (by sending messages it would have sent in $W_V$ on the same rounds), and likewise acts towards the parties in $S_\H$ as if $j$ were a correct party in the world $W_{V'}$.

    Let $W_j = W_{V'}$ for all $j \in S_\H$ and let $W_i = W_V$. We claim that a correct party $j \in S_\H \cup \{i\}$ can never distinguish $W_{\mathsf{Hyb}}$ from $W_j$. Since every correct party $j$ has the same input in $W_{\mathsf{Hyb}}$ and $W_j$, our claim is false only if there exists some first round $r \in [R]$ such that some correct party $j$ can distinguish $W_{\mathsf{Hyb}}$ from $W_j$ at the end of the round, even though no correct party $j'$ could distinguish $W_{\mathsf{Hyb}}$ from $W_{j'}$ at the beginning of the round. However, such a party $j$ cannot exist: If it did, it would have to receive different messages from another party $k$ in $W_{\mathsf{Hyb}}$ and $W_j$ on round $r$, even though this other party $k$ cannot exist. Finally, the reason why $k$ cannot exist is it can be neither correct nor byzantine. We show this below. \begin{itemize}
        \item The party $k$ cannot be byzantine because the byzantine parties behave identically towards $j$ in $W_{\mathsf{Hyb}}$ and $W_j$ on purpose.
        \item The party $k$ cannot be correct either. If it were so, then it would send the same round $r$ messages to $j$ in the worlds $W_{\mathsf{Hyb}}$ and $W_k$ since it would not have been able to distinguish the two worlds at the start of the round. If $W_k = W_j$, then $k$ is eliminated. If $W_k \neq W_j$, then either $k = i$ and $j \in S_\H$, or $k \in S_\H$ and $j = i$. Either way, $k$ sends the same round $r$ messages to $j$ in the worlds $W_k$ and $W_j$ (since the transcript between $k$ and $j$ is described by $\tau_{S_\H \leftrightarrow i}$ in both worlds), which again eliminates $k$.
    \end{itemize}

    The indistinguishability above means that in the world $\W_{\mathsf{Hyb}}$ the party $i$ outputs $V$ as it would in the world $W_V$, while the parties in $S_\H$ output $V'$ as they would in $W_{V'}$.
\end{proof}

\babbcorr

\begin{proof}
    Suppose $L = \Omega(Rn)$, and suppose there is an $R$-round byzantine broadcast protocol $\mathcal{BB}$ for $L$-bit inputs. Using $\mathcal{BB}$, the parties can achieve interactive consistency on $L$-bit inputs in $Rn$ rounds as follows: On the rounds $R(i-1)+1,\dots,Ri$, the party $i \in [n]$ broadcasts its input with $\mathcal{BB}$.\footnote{The reason why the parties do not broadcast their inputs in parallel is that if they were to do so, then it would no longer be clear to which byzantine broadcast instance each message belongs.} Observe that if $\mathcal{BB}$ violated the lower bound stated in Corollary \ref*{babbcorr}, then this interactive consistency protocol would violate the lower bound of Theorem \ref{interthm}. As Theorem \ref{interthm} is true, we conclude that the lower bound for byzantine broadcast holds.

    Next, let us consider byzantine agreement. Whenever byzantine agreement can be solved in $R$ rounds, byzantine broadcast can be solved in $R + 1$ rounds as follows: The sender sends its input to every other party in the first round, and in the rounds $2,\dots,R+1$ the parties reach byzantine agreement on the sender's input. Since the bit complexity of the first round is $L(n-1) = o(Ln^{1+1/d})$ and the one additional round is asymptotically irrelevant, the lower bound for byzantine broadcast implies the lower bound for byzantine agreement.

    Note that the reduction for byzantine agreement relies requires $n^{1/d} = \omega(1)$. For when $n^{1/d} = O(1)$, we recall that $\Z_\pj^{n,d}$ partitions the parties into two sets $U$ and $U'$ of size $\frac{n}{2}$ each, and assign the parties in $U$ a common uniformly random $L$-bit input $v$ while assigning the parties in $U'$ the fixed input $\mathbf{0}$. By the validity of byzantine agreement, the parties in $U$ must output $v$ (since as far as they know, the parties in $U'$ with inputs other than $v$ could all be byzantine), and this ensures agreement on $v$. So, the parties in $U'$ must learn $v$. In fault-free executions, for each party $i \in U'$ the bidirectional transcript $T_{([n] \setminus \{i\}) \leftrightarrow i}$ between $i$ and every other party must have at least $L$ bits of entropy;\footnote{The transcript entropy $H(T_{([n] \setminus \{i\}) \leftrightarrow i}) \geq L$ is required for $i$ to learn $v$. With less transcript entropy, it would not always be possible to deduce $v$ from $T_{([n] \setminus \{i\}) \leftrightarrow i}$; however, such a deduction must always be possible since $i$ must compute $v$ based on $T_{([n] \setminus \{i\}) \leftrightarrow i}$ (and on its initial state, which is unrelated to $v$).} hence, by Lemma \ref{sync-cov-lemma}, the communication between $i$ and every other party must be at least $\Omega(L)$ bits in expectation. This implies an \linebreak $\Omega(\frac{L|U'|}{2}) = \Omega(Ln) = \Omega(Ln^{1+1/d})$ total communication complexity bound if $n^{1/d} = O(1)$. 
\end{proof}

\section{Asynchronous Lower Bounds} \label{asyncsection}

Now, we switch over to asynchronous networks. This time, we have three similar but separate lower bound theorems to prove, as the asynchronous setting is less amenable to reductions. The theorems will make use of Lemma \ref{termlemma} (stated below), which we prove in the \hyperref[termlemma*]{appendix}. Intuitively speaking, the lemma says that if the adversary uses a message scheduling strategy which forces a party $i$ to learn the correct random output $X$ of a protocol solely based on the messages it receives from a fixed set of parties $S$, then the transcript of the communication directed from $S$ to $i$ must have $\Omega(H(X))$ bits of entropy for $i$ to learn $X$.

\apxthm{lemma}{termlemma}{
    Let $X$ be some finite uniformly random variable. Suppose that in a randomized asynchronous $n$-party protocol, some party $i$ must output a prediction $Y_i$ of $X$ solely based on the messages it receives from some fixed set of parties $S \subseteq [n] \setminus \{i\}$, and this prediction must satisfy $\Pr[Y_i = X] \geq 0.5 + \varepsilon$ for some absolute constant $\varepsilon > 0$. If $i$'s initial knowledge prior to running the protocol (its randomness and its input if it has any) is independent of $X$, then the one-way transcript $T_{S \rightarrow i}$ describing the communication directed from $S$ to $i$ must have at least $\varepsilon'H(X)$ bits of Shannon entropy for some absolute constant $\varepsilon' > 0$.
}

Our simplest lower bound in the asynchronous setting is the one for reliable broadcast.

\rbthm

\begin{proof}
    Recall that $\Z_\pj^{n,d}$ partitions the parties into two sets $U, U'$ of size $\frac{n}{2}$, where a quorum is a hyperplane of $U$. For the reliable broadcast lower bound, we assume that the sender $s$ is in $U'$. The input distribution $\V$ assigns a uniformly random $L$-bit input to $s$, which means that \linebreak the correct parties must output $\V$ if $s$ is correct. This distribution has $L$ bits of entropy.

    Consider a fault-free protocol execution. We suppose that the send-omission adversary employs a message scheduling strategy where the traffic between the parties in $U \cup \{s\}$ always flows freely while all other communication is blocked until when the parties in $U \cup \{s\}$ all terminate. This is a permissible way for the adversary to schedule messages because the parties in $U \cup \{s\}$ will indeed all terminate solely based on the messages they send each other. This is because they will not be able to distinguish the fault-free setting from the setting where the parties in $U' \setminus \{s\}$ are all silent faulty parties, while the remaining parties are correct. In this setting, the parties in $U \cup \{s\}$ would have to terminate due to the protocol's liveness (since the sender $s$ is correct); so, they must terminate in the fault-free setting too.

    Fix a quorum $S_\H \in \S_\pj^{n,d}$ and a party $i \in U' \setminus \{s\}$. Consider the random one-way transcript variable $T_{S_\H \rightarrow i}$ which describes the communication that the party $i$ receives from the parties in $S_\H$ in a random fault-free execution (where $s$'s input and the parties' randomnesses are all random) where the adversary begins the message scheduling as described above. We claim that $H(T_{S_\H \rightarrow i}) \geq \varepsilon'H(\V) = \varepsilon'L$ for some absolute constant $\varepsilon' > 0$. If this claim is true, then the parties in $S_\H$ must in expectation send $\Omega(L)$ bits in total to $i$ (by Lemma \ref{async-cov-lemma}), and since this is true for all $S_\H \in \S_\pj^{n,d}$ and $i \in U' \setminus \{s\}$ (where $U' \setminus \{s\}$ is a set of $\frac{n}{2} - 1 = \Theta(n)$ parties), Lemma \ref{comb-lemma} implies that the protocol involves $\Omega(Ln^{1+1/d})$ bits of expected communication.

    It remains to show that the claim $H(T_{S_\H \rightarrow i}) \geq \varepsilon'H(\V)$ is true. Suppose otherwise, for the sake of contradiction. Then, the adversary (who has made the parties in $U$ terminate before $i$ receives any message) can continue its message scheduling strategy by first delivering the messages from $S_\H$ to $i$, while delaying the other pending messages sent to $i$ for later. If the adversary does this, then the party $i$ will terminate solely based on the messages it receives from $S_\H$ without waiting for any other parties' messages, and its random output $Y$ will have to equal the sender's uniformly random $L$-bit input with at least $0.5 + \varepsilon$ probability by the validity of the reliable broadcast protocol (which must hold with at least the same probability); hence, Lemma \ref{termlemma} requires $H(T_{S_\H \rightarrow i}) \geq \varepsilon'H(\V)$ to hold. The reason $i$ will not wait for any other parties' inputs is that if it had a positive probability $p$ of doing so, then the adversary could force it to wait forever with the probability $p$ by causing the parties outside $S_\H \cup \{i\}$ to suffer from send-omission faults which prevent their messages from reaching the parties in $U'$, while keeping everything else the same. Then, $i$ waiting forever as we have described would be a \linebreak liveness violation because the parties in $S_\H$ would be correct parties who have terminated.
\end{proof}

Theorem \ref{rbthm} implies an $\Omega(Ln^2)$ bit expected communication complexity lower bound for any terminating reliable broadcast protocol for send-omission adversaries that can corrupt up to $n - 1$ parties, because such an adversary is stronger than a send-omission $\Z_\pj^{n,d}$-adversary. Note that omission-tolerant reliable broadcast is possible against up to $t < n$ send-omission faults with a simple echo-based protocol \cite{ht94}, as long as uniformity (the additional guarantee that if a faulty party outputs, then every correct party outputs) is not required. Requiring uniformity against send-omission faults decreases the optimal fault tolerance to $t < \frac{n}{2}$ \cite{cachin2011introduction}.

Now, we are ready to prove the lower bound for asynchronous byzantine agreement. We would like to get this lower bound using the same broadcast-to-agreement reduction we used to prove Corollary \ref{babbcorr}. Sadly, we cannot do this, because if a faulty sender sends its input to only some of the correct parties, then only some of the correct parties will have asynchronous \linebreak byzantine agreement inputs (the input they got from the sender), and this might lead to only some of the correct parties terminating the asynchronous byzantine agreement protocol. One way to solve this issue is to use an asynchronous byzantine agreement protocol that has the ``strong termination'' property defined in \cite{mw24}, which is comparable to the totality guarantee of reliable broadcast. However, we prefer to use standard termination guarantees and give a separate proof for the asynchronous byzantine agreement lower bound.

Note that some parts of our proof of Theorem \ref*{abathm} below mirror our proof of Theorem \ref{rbthm}. Hence, to avoid redundancy, we will treat these parts more concisely.

\abathm

\begin{proof}
    Again, the adversary structure $\Z_\pj^{n,d}$ partitions the parties into two sets $U, U'$ of size $\frac{n}{2}$. We choose an input distribution $\V$ that assigns a common uniformly random $L$-bit input $v$ to the parties in $U$ and the fixed input $\mathbf{0}$ to the parties in $U'$. As we did to prove Theorem \ref{rbthm}, we will lower bound the communication complexity of fault-free executions. The adversary employs a scheduling strategy that causes fault-free executions to work as follows: \begin{enumerate}
        \item The adversary allows the traffic between the parties in $U$ to flow freely, while blocking all other communication until the parties in $U$ all terminate. The parties in $U$ will terminate the protocol with the output $v$ because as far as they are aware, the parties in $U'$ could be faulty, and if this were the case, the validity and termination guarantees of the protocol would require the parties in $U$ to terminate with the output $v$.
        \item Now that the parties in $U$ have terminated with the output $v$, the agreement property of the protocol requires the parties in $U'$ to terminate with this output as well. Therefore, the transcript $T_{S_\H \rightarrow i}$ for any $S_\H$ and $i \in U'$ must have at least $\varepsilon'L$ bits of entropy for some absolute constant $\varepsilon' > 0$. This implies (by Lemmas \ref{termlemma} and \ref{async-cov-lemma}) that $S_\H$ has to send $\Omega(L)$ bits in expectation to $i$, and hence by Lemma \ref{comb-lemma} we conclude that the protocol must involve $\Omega(Ln^{1+1/d})$ bits of expected communication. The reason why $H(T_{S_\H \rightarrow i}) \geq \varepsilon'L$ is that $i$ must terminate the protocol with a prediction $Y_i$ of the uniformly random $L$-bit value $v$ solely based on the messages it receives from $S_\H$, in case everyone outside $S_\H \cup \{i\}$ is faulty, and this prediction must be correct with at least $0.5 + 2\varepsilon$ probability. We explain below how we obtain the probability $0.5 + 2\varepsilon$.
    \end{enumerate}
    
    The new subtlety with asynchronous byzantine agreement that was not present for reliable broadcast is that on the last step, the protocol achieving output safety with at least $0.75 + \varepsilon$ probability only implies that $i$'s prediction $Y_i$ of $v$ must equal $v$ with $0.5 + 2\varepsilon$ probability. The reason is that there are two error events $E_\mathsf{ag}$ and $E_\val$ this time, both of which can occur with up to $1 - (0.75 + \varepsilon) = 0.25 - \varepsilon$ probability, and we have $\Pr[Y_i = v] \geq 1 - \Pr[E_\mathsf{ag} \cup E_\val] \geq 0.5 + 2\varepsilon$ by the union bound. The first event $E_\mathsf{ag}$ represents an agreement violation. This is the bad event that in the fault-free execution, $i$'s output $Y_i$ differs from the output of some party in $U$, and this error event must happen with at most $0.25 - \varepsilon$ probability. Meanwhile, the event $E_\val$ represents a validity violation: This is the bad event that there exists some party in $U$ who does not output $v$ in the fault-free execution. We have $\Pr[E_\val] \leq 0.25 - \varepsilon$ because $E_\val$ corresponds to a validity violation in the alternate setting where only the parties in $U$ are correct (with this making $v$ the valid output they have to obtain) while the remaining parties are all silent faulty parties. The parties in $U$ cannot distinguish this alternate setting from the fault-free setting until after they all output in the fault-free setting, which means that in both settings they must all output $v$ except with probability at most $0.25 - \varepsilon$.
\end{proof}

Finally, the proof of the core set agreement lower bound follows the template of the proof above, but we use the adversary structure $\Z_\tpj^{n,d}$ and add an extra scheduling step to ensure that the parties agree on a core set that contains $\Theta(n)$ parties (and thus learn $\Theta(Ln)$ bits).

\acsthm

\begin{proof}
    Recall that $\Z_\tpj^{n,d}$ partitions the parties into three sets $U_1, U_2, U'$ of size $\frac{n}{3}$, and admits \linebreak a minimal quorum $C \supseteq U_1$ such that $[n] \setminus C$ is a maximal set in $\Z_\tpj^{n,d}$. We consider the input distribution $\V$ which assigns every party an i.i.d.\ uniformly random $L$-bit input. Once again, we want to lower bound the expected communication complexity of fault-free executions. In these executions, the adversary employs the scheduling strategy described below.  
    
    Initially, the adversary lets the traffic between the parties in $C$ flow freely, while blocking all other communication. The parties in $C$ will all terminate since they cannot afford to wait to hear from the parties outside, who could all be faulty silent parties as far as the parties in $C$ can tell. Moreover, each party $i \in C$ will output $Y_i = \{(c_1, y_1), \dots, (c_{|C|}, y_{|C|})\}$, where $C = \{c_1, \dots, c_{|C|}\}$ and $y_i$ is the input of the party $c_i$ for each $c_i \in C$. We show below that this claim concerning $Y_i$ is true with at least $0.625 + 3\varepsilon$ probability, based on the fact that the probability of a validity violation is at most $0.125 - \varepsilon$. To prove this claim, we define a validity violation event $E_\val$, define the three error events $E_a,E_b,E_c$ that could cause $Y_i$ to differ from $\{(c_1, y_1), \dots, (c_{|C|}, y_{|C|})\}$, and show that $\Pr[E_a \cup E_b \cup E_c] \leq 3\Pr[E_\val] \leq 0.375 - 3\varepsilon$. \begin{enumerate}
            \item Let $E_a$ be the bad event that for some $j \not \in C$ and any $y$ the set $Y_i$ contains the tuple $(j, y)$. \linebreak The party $i \in C$ decides on $y$ before it learns anything about $j$'s uniformly random input $v$, which means that $\Pr[y \neq v] = 1 - 2^{-L} \geq 0.5$. Hence, we have $0.125 - \varepsilon \geq \Pr[E_\val] \geq \Pr[E_a \cap E_\val] \geq 0.5\Pr[E_a]$.
            \item Let $E_b$ be the bad event that there is a party $c \in C$ such that $Y_i$ does not contain a tuple of the form $(c, y)$. Observe that $E_b \setminus E_a$ is impossible. The reason is that if $E_b$ happens but $E_a$ does not, then for some proper subset $C'$ of $C$, the set $Y_i$ only contains tuples of the form $(c', y')$ when $c' \in C'$. This is a contradiction because core set agreement requires $[n] \setminus C'$ to be in $\Z_\tpj^{n,d}$, which contradicts the maximality of $[n] \setminus C$ in  $\Z_\tpj^{n,d}$.
            \item Let $E_c$ be the bad event that there exists any tuple $(j, y) \in Y_i$ such that $j$'s input is not $y$. This is a validity violation; hence, $\Pr[E_c] \leq \Pr[E_\val] \leq 0.125 - \varepsilon$.
        \end{enumerate}
        Combining all of the inequalities above, we get $\Pr[E_a \cup E_b \cup E_c] = \Pr[E_a] + \Pr[E_b \setminus E_a] + \Pr[E_c \setminus (E_a \cup E_b)] \leq \Pr[E_a] + 0 + \Pr[E_c] \leq 0.375 - 3\varepsilon$.
    
    In the next message scheduling stage, the adversary allows the traffic between the parties in $U = U_1 \cup U_2$ to flow freely, but it keeps delaying any messages whose senders or intended recipients are in $U'$. This will cause the parties in $U$ to terminate. As usual, the quorum $U$ cannot wait to hear from the parties in $U'$ because the parties in $U'$ could all be faulty.

    Now that the parties in $U$ have terminated with the output, we are in a situation analogous to one we faced in our asynchronous byzantine agreement lower bound proof. That is, for each quorum $S_\H \subseteq U$ (which consists of the parties in $U_1$ and $U_2$ whose labels in $PG(d,q)$ are incident to the hyperplane $\H$) and each party $i \in U'$, the one-way transcript $T_{S_\H \rightarrow i}$ must have $\Omega(Ln)$ bits of entropy for $i$ to terminate with the correct output $Y = \{(c_1, y_1), \dots, (c_{|C|}, y_{|C|})\}$ (which is a uniformly random variable that has $L|C| = \Omega(Ln)$ bits of entropy) with at least $0.5 + 4\varepsilon$ probability, which implies (by Lemmas \ref{termlemma} and \ref{async-cov-lemma}) that the parties in $S_\H$ are required to send $\Omega(Ln)$ bits in expectation to $i$, and hence by Lemma \ref{comb-lemma} that the protocol must involve $\Omega(Ln^{2+1/d})$ bits of expected communication. The reason for the $0.5 + 4\varepsilon$ probability here is that if $i$ outputs anything but $Y$, then one (or both) of two error events must have happened: Either some party $c \in C$ obtained some output other than $Y$ from the protocol (which we argued must happen with at most $0.375 - 3\varepsilon$ probability), or $i$'s output disagrees with the output of some party $c \in C$ (which can happen with at most $0.125 - \varepsilon$ probability). 
\end{proof}

\subsection{The Importance of Termination} \label{pullcastsection}

In this section, we prove that our lower bound for terminating reliable broadcast no longer holds if we drop the termination requirement and allow the parties to run forever after they output. We will then discuss why this is also the case for the core set agreement and the asynchronous byzantine agreement lower bounds, and even if the adversary is byzantine.

For our proof, we design a deterministic and error-free asynchronous protocol $\PullCast$ for reliable broadcast against any general-omission adversary. For any choice of a parameter $\delta > 1$ (which is a multiple of $\frac{1}{n}$), $\PullCast$ can support $L$-bit inputs with $O(\delta n^2)$ messages and $(1 + \frac{1 - 1/n}{\delta - 1})Ln + O(\delta n^2\log(n\delta))$ bits of communication, albeit with the extremely high latency $2(\delta - 1)n - 1$.\footnote{The latency $2(\delta - 1)n - 1$ means that in a $\PullCast$ execution where the adversary delays the correct parties' messages by at most $1$ time unit, if a correct party outputs from the protocol (which is what the sender immediately does if it is correct) at some point in time $T$, then every correct party outputs from the protocol by the time $T + 2(\delta - 1)n - 1$. So, if $\delta - 1 = \Theta(1)$, which is required for $\PullCast$ to achieve the communication complexity $O(Ln + n^2\log n)$, then the latency is $\Theta(n)$.} This complexity does not violate our lower bounds as $\PullCast$ does not terminate. That is, the parties keep running even after they learn the sender's input.

To obtain $\PullCast$, we need to use an $(N, k)$-erasure correcting code. Such a code encodes a string $m$ into a vector of $N$ symbols $\mathbf{s} = (s_1,\dots,s_N)$ such that any $k$ of these symbols are enough to reconstruct $m$. Formally, we have two functions $\mathsf{Enc}_{N, k}(m)$ and $\mathsf{Dec}_{N, k}(\tilde{\mathbf{s}})$. The encoding function $\mathsf{Enc}_{N, k}(m)$ maps a string $m$ into a vector of $N$ symbols $\mathbf{s} = (s_1, \dots, s_N)$. The decoding function $\mathsf{Dec}_{N, k}(\tilde{\mathbf{s}})$ takes in an incomplete symbol vector $\tilde{\mathbf{s}} = (\tilde{s}_1, \dots, \tilde{s}_N)$ with some missing symbols (represented by $\bot$) and outputs the decoded string $m$, provided that $\tilde{\mathbf{s}}$ contains at least $k$ non-$\bot$ symbols and each non-$\bot$ symbol $\tilde{s}_j$ is the $j^{\text{th}}$ symbol of $\mathsf{Enc}_{N, k}(m)$.

We use Reed-Solomon $(N,k)$-erasure codes \cite{reed-solomon}, which encode $L$-bit strings into $N$ symbols of size $\Theta(\frac{L}{k} + \log N)$ each. Given the parameter $\delta > 1$ such that $\delta n$ is an integer, $\PullCast$ uses the erasure code parameters $(N, k) = (\delta n - 1, (\delta - 1)n)$.

In $\PullCast$, a party $i$ who knows the sender's input $v$ (initially only the sender $s$) computes $(s_1, \dots, s_{\delta n - 1}) = \mathsf{Enc}_{\delta n - 1, (\delta - 1)n}(v)$, sends the message $\msg{\SYM,i,s_i}$ containing the $i^{\text{th}}$ symbol to every other party, and starts answering symbol requests. That is, if $i$ receives (or has received before learning $v$) a symbol request message $\msg{\REQ,k}$ from a party $j$ for the $k^{\text{th}}$ symbol, then $i$ replies to $j$ with the answer $\msg{\SYM,k,s_k}$. A party $i$ outputs $v$ when it knows $v$, which is when \linebreak $i$ begins the protocol if $i$ is the sender and when $i$ receives $(\delta - 1)n$ symbols otherwise.

To support long inputs efficiently, $\PullCast$ uses a dynamic symbol request system. Unlike most other protocols based on Cachin and Tessaro's seminal work on fault-tolerant information dispersal \cite{ct05}, $\PullCast$ does not fix in advance which party should send a symbol to whom. Instead, each party $i$ maintains a counter $c_i$ of how many symbols it has received so far, so that whenever $i$ receives any new symbol from any party $j$, it can increment $c_i$ and request the next symbol $s_{n + c_i}$ from $j$. This way, $i$ requests more symbols from the parties that reply quickly to $i$. We note that this kind of dynamic request pattern is well known and used in practice by peer-to-peer file sharing protocols like BitTorrent, even though it is uncommon in the fault-tolerant consensus literature.

\begin{theorem}
    $\PullCast$ is a non-terminating reliable broadcast protocol which can tolerate any general-omission adversary with perfect security.
\end{theorem}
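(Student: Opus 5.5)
Since every party, including a faulty one, runs the prescribed code, the safety properties follow from a single invariant, and liveness is the only part with real content. The plan is to first establish this invariant: every $\msg{\SYM,k,s}$ message ever sent by any party carries the genuine symbol $s = s_k$ of $(s_1,\dots,s_{\delta n - 1}) = \mathsf{Enc}_{\delta n - 1, (\delta - 1)n}(v)$, where $v$ is the sender's input. We prove it by induction over the global order of events. The sender knows $v$ from the start. Any other party $j$ sends symbols only after decoding, and it decodes only once it holds $(\delta - 1)n$ symbols. By the induction hypothesis these symbols are genuine, so the correctness of $\mathsf{Dec}$ gives $j$ exactly $v$. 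Omission faults can only drop messages; they never alter them, so the invariant survives.

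From the invariant, \emph{validity} and \emph{agreement} follow immediately, in the strong form mentioned earlier: no party, correct or faulty, ever outputs anything other than $v$.

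For \emph{liveness}, we first show that if the sender is correct then some correct party outputs, namely the sender itself at the start. It therefore suffices to prove \emph{totality}: if some correct party $i$ knows $v$, then every correct party $j$ eventually outputs. Since $i$ is correct, its initial message $\msg{\SYM,i,s_i}$ to $j$ is eventually delivered, so $c_j \geq 1$. We then argue by contradiction. Suppose $j$ never reaches $(\delta-1)n$ symbols. Every time $j$ receives a new symbol from $i$, it sends $i$ a request for a symbol it has not yet asked for. Both $i$ and $j$ are correct, so this request and $i$'s reply are eventually delivered, even if $i$ learned $v$ only after the request arrived, because $i$ answers stored requests once it knows $v$. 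The reply yields a fresh symbol at $j$, which triggers a further request. This produces an infinite chain of fresh symbols, which contradicts the assumption, because $j$ decodes as soon as it holds $(\delta-1)n$ distinct symbols.

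The main obstacle is the index bookkeeping that makes "fresh" true. We must check that the request indices $n + c_j$ never repeat and never collide with the indices $1,\dots,n$ used in the initial $\SYM$ messages. We must also check that they stay within $N = \delta n - 1$ until $c_j$ reaches $(\delta - 1)n$. Concretely, when $j$ has received $c_j \le (\delta-1)n - 1$ symbols, it requests index $n + c_j \le \delta n - 1 = N$. These indices are strictly increasing, so they are distinct and all exceed $n$; hence each answer is a symbol $j$ has not seen before. A secondary subtlety concerns asynchrony: $j$ may receive initial $\SYM$ messages from other parties interleaved with replies. The counter may then jump past pending requests, but this only speeds up the chain toward $(\delta-1)n$ and never breaks it. Duplicate symbols are also harmless, because $j$ counts only newly received symbols.
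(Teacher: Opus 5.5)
Your structure is the paper's: a minimal-counterexample invariant (every $\SYM$ message carries a genuine symbol) for safety, and the request/reply feedback loop between a correct party $i$ that knows $v$ and a correct $j$ for totality. Your liveness argument and your index-range check are fine.

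The gap is in the safety step, which you treat as immediate. You say a party decodes ``only once it holds $(\delta-1)n$ symbols,'' and you dismiss duplicates because ``$j$ counts only newly received symbols.'' That is not what $\PullCast$ does. The counter $c_i$ is incremented on \emph{every} $\SYM$ message received while $v$ is unknown, and $\tilde{s}_k \gets s$ is executed with no freshness check. So $c_i = (\delta-1)n$ only certifies that $(\delta-1)n$ messages were received. If two of them carried the same index, $\mathsf{Dec}$ would be called on a vector with fewer than $(\delta-1)n$ non-$\bot$ entries. It could then return a value other than $v$ even though every received symbol is genuine. Duplicates are therefore not harmless; they are impossible, and proving that is the non-trivial content of the paper's safety proof (``$i$ never updates a symbol twice''). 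Your induction replaces this with ``by the correctness of $\mathsf{Dec}$.''

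You already have most of the needed ingredient, but it belongs inside the safety induction rather than the liveness discussion, and it needs one more piece:
\begin{itemize}
\item A message $\msg{\SYM,k,\cdot}$ with $k \leq n$ can reach $i$ only from party $k$. Party $k$ sends it at most once, because the ``learning $v$'' block fires at most once: a party that knows $v$ ignores further $\SYM$ messages.
\item A message with $k > n$ can only be the reply to $\msg{\REQ,k}$. Party $i$ sent that request exactly once, to a single party, when $c_i = k-n$; this is your observation that request indices strictly increase and exceed $n$.
\end{itemize}
Hence each $\tilde{s}_k$ is written at most once, so $(\delta-1)n$ increments of $c_i$ mean $(\delta-1)n$ distinct genuine symbols. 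Only then does the correctness of $\mathsf{Dec}$ give $v$. Also note that your liveness argument needs no freshness at all: the infinite chain of replies from $i$ drives $c_j$ to $(\delta-1)n$ regardless.
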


\begin{algobox}[\textbf{ with the parameter }$\delta > 1$]{$\PullCast$}
    \State $c_i \gets 0$ and $(\tilde{s}_1, \dots, \tilde{s}_{\delta n - 1}) \gets (\bot, \dots, \bot)$

    \Upon{learning the sender's input $v$ (initially as the sender or via decoding)}
        \State output $v$ but do not terminate
        \State $(s_1, \dots, s_{\delta n - 1}) \gets \mathsf{Enc}_{\delta n - 1, (\delta - 1)n}(v)$
        \State send $\msg{\SYM, i, s_i}$ to all parties $i \in [n]$
    \EndUpon

    \Upon{receiving a symbol request $\msg{\REQ, k}$ from a party $j$}
        \State send $\msg{\SYM, k, s_k}$ to $j$
    \EndUpon

    \Upon{receiving a symbol $\msg{\SYM, k, s}$ from a party $j$}
        \If{you do not know $v$}
            \State $c_i \gets c_i + 1$
            \State $\tilde{s}_k \gets s$
            \If{$c_i = (\delta - 1)n$}
                \State $v \gets \mathsf{Dec}_{\delta n - 1, (\delta - 1)n}(\tilde{s}_1, \dots, \tilde{s}_{\delta n - 1})$ \label{line:decode}
                \State trigger the ``learning the sender's input $v$'' block
            \Else
                \State send $\msg{\REQ, n + c_i}$ to $j$ \label{line:request}
            \EndIf
        \EndIf
    \EndUpon
\end{algobox}

\begin{proof}
    Suppose the sender's input is $v$, with $(s_1, \dots, s_{\delta n - 1}) = \mathsf{Enc}_{\delta n - 1, (\delta - 1)n}(v)$. The sender outputs $v$ immediately when it acquires its input, and we claim that the other parties can also only output $v$. Suppose not, for the sake of contradiction. Then, there exists a first non-sender party $i$ who obtains a different output $v'$ by computing $\mathsf{Dec}_{\delta n - 1, (\delta - 1)n}(\tilde{s}_1, \dots, \tilde{s}_{\delta n - 1})$  on Line \ref*{line:decode}. At the moment when this happens, it cannot be true that $\tilde{s}_j \not \in \{s_j, \bot\}$ for any $j$, because if $\tilde{s}_j \neq \bot$, then $i$ has set $\tilde{s}_j \gets s_j'$ after receiving the message $\msg{\SYM, j, s_j'}$ from some party $k$, and $s_j' = s_j$ because $k$ is a party who knows the correct output $v$. So, it must be the case that the vector $(\tilde{s}_1, \dots, \tilde{s}_{\delta n - 1})$ contains less than $(\delta - 1)n$ non-$\bot$ symbols. However, because $i$ only reaches  Line \ref*{line:decode} after performing $(\delta - 1)n$ symbol updates and $i$ never updates a symbol twice, this is also impossible; so, we reach the desired contradiction. The reason why $i$ never updates the same symbol $\tilde{s}_k$ twice is that for each $k \in [\delta n - 1]$, there can only be one party $j$ who sends $i$ the message $\msg{\SYM, k, s_k}$. If $k \leq n$, then this party $j$ is $k$, and if $k > n$, then $j$ is the party to whom $i$ sent the message $\msg{\REQ, n + c_i}$ on Line \ref*{line:request} when $i$ had \nolinebreak $c_i = k - n$.

    It remains to prove liveness. We do so by showing that if there is any correct party $i$ (the sender or not) who knows the sender's input $v$, then every correct party learns $v$. Suppose such a party $i$ exists, and consider any correct party $j$ that does not know $v$. The party $i$ sends $j$ the symbol $s_i$, and this triggers a feedback loop where $j$ can continually request a new symbol from $i$ and receive it from $i$ until when $j$ acquires $(\delta - 1)n$ symbols (not necessarily all from $i$) and outputs $v$. Note that this argument also shows that the latency is $2(\delta - 1)n - 1$, as it takes one hop for $i$ to send the symbol $s_i$ to $j$, and then it takes $(\delta - 1)n - 1$ round trips between $j$ and $i$ (each taking $2$ hops) for $j$ to receive $(\delta - 1)n - 1$ more symbols. If $j$ receives any symbols from parties other than $i$, then this will only reduce the number of round trips that must take place between $j$ and $i$ for $v$ to output; therefore, the latency will not go up.

    Lastly, note that we do not have to worry about any correct party $i$ failing to respond to a symbol request due to it not knowing the symbol. If any party $j$ requests a symbol from $i$, then $j$ must have received some symbol from $i$ before doing so, and this means that $i$ already knows $v$ because the first time when $i$ sends a symbol to $j$ is when $i$ learns $v$.
\end{proof}

\subparagraph{The Complexity of $\PullCast$.} We have already looked at $\PullCast$'s latency and shown that it is $2(\delta - 1)n - 1$. The communication in $\PullCast$ amounts to at most $\delta n - 1$ symbol ($\SYM$) messages of size $\frac{L}{(\delta - 1)n} + O(\log(\delta n))$ in addition to at most $(\delta - 1)n$ request ($\REQ$) messages of size $O(\log(\delta n))$ per party; so, $\PullCast$'s total communication complexity is $O(\delta n^2)$ messages and $(\delta n - 1)(\frac{L}{(\delta - 1)n})n + O(\delta n^2\log(\delta n)) = (1 + \frac{1 - 1/n}{\delta - 1})Ln + O(\delta n^2\log(\delta n))$ bits.

\subparagraph{Byzantine Fault Tolerance.} Since $\PullCast$ can only tolerate send-omission adversaries, a natural question is: Can it be made byzantine fault tolerant, assuming the adversary structure satisfies the necessary $\Q^3$ condition? While a formal treatment of byzantine fault tolerant reliable broadcast protocol is outside the scope of this paper (as we are more interested in the lower bounds), below we sketch how we believe one could harden $\PullCast$ against byzantine faults. Note that the changes which we propose increase the communication complexity to $(1 + \frac{1 - 1/n}{\delta - 1})Ln + O(\delta n^2\kappa\log(\delta n) + \delta n^3\log(\delta n))$ bits, where $\kappa$ is size of a cryptographic hash.
\begin{itemize}
    \item \textbf{Preventing Duplicated Requests.} The first new challenge that emerges with byzantine faults is that a byzantine party can request a symbol from multiple correct parties, to waste communication by making multiple correct parties send it the symbol. To prevent this, we could replace each symbol request where a party $i$ sends the message $\msg{\REQ, k}$ to a party $j$ with an $O(n)$-times costlier fault-tolerant request mechanism that prevents $i$ (if it is byzantine) from requesting the $k^\text{th}$ symbol from two distinct parties.\footnote{In this mechanism, $i$ would send everyone the message $\msg{\REQ, i, k, j}$ to indicate that it requests the $k^\text{th}$ symbol from the party $j$, and everyone would relay this request to $j$. The party $j$ would send the symbol upon hearing the request relayed from a quorum $S_{i,k,j}$, where $[n] \setminus S_{i,k,j} \in \Z$. To prevent $i$ from causing two distinct parties $j$ and $j'$ to send it the same symbol, we would make each correct party only relay the first request $i$ makes for each symbol index $k$. This way, the quorums $S_{i,k,j}$ and $S_{i,k,j'}$ required for $j$ and $j'$ to both send $i$ the same symbol cannot both form: Such two quorums would have to contain a common correct party by the $\Q^3$ condition, but a correct party does not relay duplicated requests.} Unfortunately, this would increase the $O(\delta n^2\log(\delta n))$ bit complexity of the requests to $O(\delta n^3\log(\delta n))$.
    \item \textbf{Handling Incorrect Symbols.} The second challenge is that the byzantine parties can send the other parties incorrect or inconsistent symbols. To prevent this, we would require the parties who know the sender's input $v$ to send symbols together with cryptographic proofs which witness the correctness of the symbol. This can be implemented with Merkle tree proofs \cite{merkle} of length $O(\kappa \log(\delta n))$, where the Merkle tree is computed over the correct symbol list $(s_1, \dots, s_{\delta n - 1})$. To establish the basis for these proofs, the sender would also \linebreak have to reliably broadcast a Merkle root hash for the proofs to be checked against. Finally, a non-sender party would obtain an output $y$ from the protocol after it outputs some root hash $h$ from the sender's root hash reliable broadcast, it acquires $(\delta - 1)n$ symbols which are correct with respect to $h$, it decodes these symbols into the tentative output $y$, and it verifies that the sender computed $h$ correctly by recomputing it based on $y$.
\end{itemize} We note that our suggestion of using Merkle tree proofs to handle incorrect or inconsistent erasure code symbols is very standard in the byzantine fault tolerance literature \cite{ct05,nayak20}.

\subparagraph{Byzantine Agreement and Core Set Agreement.} While $\PullCast$ only proves that our reliable broadcast lower bound requires termination, our bounds for asynchronous byzantine agreement and core set agreement also need termination. Below, we sketch the reasons why. \begin{itemize}
    \item \textbf{Core Set Agreement.} By \cite{bcg93}, one can reduce core set agreement to $n$ instances of reliable broadcast and $n$ instances of binary byzantine agreement, and this reduction also works against general adversaries \cite{acc25}. Therefore, one can compose $n$ instances of $\PullCast$ with $n$ instances of an asynchronous binary byzantine agreement protocol which has a complexity that does not depend on $L$ (e.g.\ \cite{c23}) to get a core set agreement protocol that can support $L$-bit inputs with $O(Ln^2)$ bits of communication for all sufficiently large $L$.
    \item \textbf{Byzantine Agreement.} For byzantine agreement, we observe that $\PullCast$ can support multiple senders who have a common input $v$ and behave as parties who know the correct output $v$, with the guarantee that if any sender is correct or some correct party outputs, then every correct party outputs. Hence, if the parties do not have to terminate, they can reach byzantine agreement on their $L$-bit inputs with $O(Ln + \dots)$ bits of communication by 1) running a byzantine agreement on their input's cryptographic hashes and reaching intrusion-tolerant byzantine agreement on the hash $h$ of a correct party's input\footnote{Intrusion tolerance \cite{mr10} means that the parties agree on either a correct party's input or a safe default value $\bot$. The latter can only happen if the correct parties have non-matching inputs. If the parties agree on $\bot$ in our reduction, then they can output $\bot$ from the whole protocol instead of running $\PullCast$.}, and then 2) running a $\PullCast$ instance where the parties whose inputs have the hash $h$ behave \linebreak as the parties who know $v$ to make sure that everyone learns $v$. This strategy will succeed unless a hash collision occurs, which is unlikely by the hash function's collision resistance. We note that this sort of reduction (where $\PullCast$ serves as a reconstruction protocol) is standard in the literature \cite{nayak20}.
\end{itemize}

\subsection{Termination with \texorpdfstring{\mathversion{bold}$O(L_\mathsf{out} \cdot n^{1+1/d} + n^2\log n)$}{O(L\_out * n\textasciicircum(1+1/d)} Bits} \label{termsection}

In the previous subsection, we showed that send-omission tolerant non-terminating reliable broadcast admits a solution that costs $O(L_\mathsf{out} \cdot n)$ bits of communication whenever the output length $L_\mathsf{out}$ is sufficiently large, and sketched why this is also the case for core set agreement and asynchronous byzantine agreement. Now, we will show that using the $\Q^d$ assumption alone, we can achieve termination with $O(L_\mathsf{out} \cdot n^{1+1/d})$ bits of further communication. Thus, we will show that our asynchronous lower bounds are tight, not just for the specific adversary structure families $\Z_\pj^{n,d}$ and $\Z_\tpj^{n,d}$ but for send-omission adversaries that are characterized by $\Q^d$-satisfying adversary structures in general. To obtain this result, we design a termination protocol $\Term$ that can add termination to any general-omission tolerant agreement protocol with $O(L_\mathsf{out} \cdot n^{1+1/d})$ bits of further communication (where $L_\mathsf{out}$ is the length of the output the parties agreed on), assuming the adversary is characterized by a $\Q^d$-satisfying adversary structure $\Z$. In particular, one can compose $\PullCast$ with $\Term$ to get a terminating reliable broadcast protocol for $L$-bit inputs which can tolerate any $\Q^d$ general-omission adversary with $O(Ln^{1 + 1/d} + n^2\log n)$ bits of communication. At the end of the section, we also sketch how $\Term$ can be made byzantine fault tolerant when $d \geq 3$.

\begin{proposition} \label{denseprop}
    For every $d \geq 1$ and every $\Q^d$-satisfying $n$-party adversary structure $\Z$, there exists a non-empty party set $G \subseteq [n]$ such that $|G \setminus Z| \geq |G|n^{-1/d}$ for all $Z \in \Z$.
\end{proposition}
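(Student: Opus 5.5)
The plan is a greedy shrinking argument. We start from $G_0 = [n]$ and repeatedly pass to the part of the current set that lies outside a ``bad'' adversary set, until no bad set remains. The $\Q^d$ condition will guarantee that the process stops within fewer than $d$ steps, and that it stops at a non-empty set.

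\textbf{Construction.} Set $G_0 = [n]$. Suppose $G_k$ has been defined. If $|G_k \setminus Z| \geq |G_k| n^{-1/d}$ for all $Z \in \Z$, we stop and output $G = G_k$. Otherwise we pick a witness $Z_{k+1} \in \Z$ with
\[
|G_k \setminus Z_{k+1}| < |G_k| n^{-1/d},
\]
and set $G_{k+1} = G_k \setminus Z_{k+1}$. By induction, $G_k = [n] \setminus (Z_1 \cup \dots \cup Z_k)$ and $|G_k| < n \cdot n^{-k/d} = n^{1-k/d}$ whenever $k \geq 1$ steps have been taken. The strict inequality matters here: at each step the bound is strict, and $|G_0| = n$ exactly.

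\textbf{Non-emptiness.} For any $k \leq d$, the set $G_k = [n] \setminus (Z_1 \cup \dots \cup Z_k)$ is non-empty by the $\Q^d$ condition. The condition applies to $d$ or fewer sets; alternatively, one may pad with copies of $\emptyset \in \Z$, which is available because $\Z$ is down-closed. The one point to handle carefully is that the bound $|G_k| < n^{1-k/d}$ relies on every earlier step being strict. This is automatic from how the witnesses $Z_{k+1}$ are chosen, so no step is degenerate.

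\textbf{Termination before step $d$.} If the process performed $d$ steps, we would have $|G_d| < n^{1-d/d} = 1$, so $G_d = \emptyset$. This contradicts the non-emptiness of $G_d$ just established. Hence the process stops at some $k < d$. The resulting $G = G_k$ is non-empty, and by the stopping rule it satisfies $|G \setminus Z| \geq |G| n^{-1/d}$ for all $Z \in \Z$, which is exactly the statement of Proposition \ref{denseprop}.

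I do not expect a real obstacle in this argument. The only delicate point is combining the strict inequalities to get $|G_d| < 1$, so that $G_d$ must be empty.
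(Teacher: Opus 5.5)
Your proof is correct and is essentially the paper's argument. The paper runs the greedy process for exactly $d$ steps, choosing each $Z_i$ to minimize $|S_{i-1} \setminus Z_i|$. It then uses $S_d \neq \emptyset$ to find a consecutive pair with $|S_{i+1}| \geq |S_i| n^{-1/d}$. You instead stop at the first good set using an arbitrary witness, which is the same counting arranged differently.
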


\begin{proof}
    We prove the proposition with a greedy algorithm that finds the set $G$. Let $S_0 = [n]$. For each $i \in [d]$, we inductively define the sets $Z_i$ and $S_i$ as follows: $Z_i$ is any arbitrary set in $\Z$ which minimizes $|S_{i-1} \setminus Z_i|$, and $S_i = S_{i-1} \setminus Z_i$. Observe that by the $\Q^d$ condition we have $S_d = [n] \setminus (\bigcup_{i = 1}^d Z_i) \neq \emptyset$. This means that there have to exist two successive terms $|S_i|,|S_{i+1}|$ in the sequence $|S_0|,|S_1|,\dots,|S_d|$ such that $|S_{i+1}| \geq |S_i|n^{-1/d}$, as otherwise we would have $|S_d| < |S_0|(n^{-1/d})^d = 1$. The set $G$ we seek is the set $S_i$ such that $|S_{i+1}| \geq |S_i|n^{-1/d}$, since we have $|S_{i+1}| = \min_{Z \in \Z}|S_i \setminus Z|$ by definition.
\end{proof}

In $\Term$, we use the set $G$ from Proposition \ref*{denseprop} by tasking the parties in $G = \{g_1, \dots, g_{|G|}\}$ with helping every party terminate. That is, we encode the agreed upon output $v$ into $|G|$ symbols $(s_1, \dots, s_{|G|})$ with a $(|G|, \ceil{|G|n^{-1/d}})$-erasure code, and require each party $g_i \in G$ to send everyone the symbol $s_{g_i}$. Because $G$ contains at least $\ceil{|G|n^{-1/d}}$ correct parties, this suffices for every party to receive at least $\ceil{|G|n^{-1/d}}$ symbols and thus learn $v$. Finally, to obtain termination, we add totality: We make it so that before a party $i$ terminates $\Term$ with the output $v$, $i$ sends each symbol $s_{g_j}$ to the respective party $g_j$, and thus ensures that every correct party in $G$ will be able to fulfill its duty.

Formally, $\Term$ involves some other asynchronous protocol $\Pi$ which the other parties run. $\Pi$ here can be any protocol which guarantees that no two parties (correct or faulty) obtain distinct outputs from it.\footnote{To keep our lower bounds for asynchronous protocols general, we proved them using only the agreement property that the correct parties' outputs match. However, for $\Term$ to work, we also need $\Pi$ to achieve the uniform agreement guarantee that no two parties (correct or faulty) obtain non-matching outputs. Note that $\PullCast$ achieves this natural guarantee, as explained in its security proof.} $\Term$ is sequentially composed with $\Pi$, and it ensures the following totality property: If any correct party outputs from $\Pi$, then every correct party terminates $\Term$. Furthermore, there can be no spurious $\Term$ outputs: If a party outputs $v$ from $\Term$, then some party must have output $v$ from $\Pi$ earlier. Note that if any party terminates $\Term$, then it also terminates $\Pi$ (stops running it) whether or not it has output from $\Pi$, as this is the whole point of $\Term$. Depending on $\Pi$ (for example if $\Pi$ is $\PullCast$), this quitting can cause $\Pi$ to lose its liveness, preventing the other correct parties from outputting from $\Pi$, but $\Term$ still ensures that every correct party terminates with the correct output.

\begin{algobox}{$\Term$}
    \State $(\tilde{s}_{1}, \dots, \tilde{s}_{|G|}) \gets (\bot, \dots, \bot)$ and $\mathit{relayed} \gets \mathsf{false}$
    \Upon{learning the output $v$ (from $\Pi$ or via decoding)}
        \State $(s_{1}, \dots, s_{|G|}) \gets \mathsf{Enc}_{|G|, \ceil{|G|n^{-1/d}}}(v)$
        \For{each $g_j \in G$}
            \State send $\msg{\SYM, g_j, s_j}$ to $g_j$
        \EndFor
        \If{$i \in G$ and $\mathit{relayed} = \mathsf{false}$}
            \State send $\msg{\SYM, i, s_i}$ to all parties
        \EndIf
        \State terminate $\Term$ with the output $v$ and quit $\Pi$
    \EndUpon

    \Upon{receiving a message $\msg{\SYM, j, s}$}
        \If{$i \in G$ and $j = i$ and $\mathit{relayed} = \mathsf{false}$}
            \State send $\msg{\SYM, i, s}$ to all parties \label{line:relay}
            \State $\mathit{relayed} \gets \mathsf{true}$
        \EndIf
        \State $\tilde{s}_{j} \gets s$
        \If{the vector $(\tilde{s}_1, \dots, \tilde{s}_{|G|})$ contains exactly $\ceil{|G|n^{-1/d}}$ non-$\bot$ symbols}
            \State $v \gets \mathsf{Dec}_{|G|,\ceil{|G|n^{-1/d}}}(\tilde{s}_{1}, \dots, \tilde{s}_{|G|})$
            \State trigger the ``learning the output $v$'' block
        \EndIf
    \EndUpon
\end{algobox}

\begin{theorem}
    Suppose $\Pi$ is a protocol from which the parties can only obtain equal outputs. If a correct party outputs from $\Pi$, then every correct party terminates $\Term$, and a party can only output any $v$ from $\Term$ if some party (possibly the same party) output $v$ from $\Pi$ earlier.
\end{theorem}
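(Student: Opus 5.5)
I would prove the two properties separately: safety (no spurious outputs) first, then totality, and the first feeds into the second.

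\emph{Safety.} I would argue by contradiction. Suppose some party outputs $v' $ from $\Term$ although no party output $v'$ from $\Pi$ earlier. Consider the first moment at which any party $i$ outputs from $\Term$ a value that was not previously output from $\Pi$. Then $i$ did not obtain the value from $\Pi$ directly, so it obtained it by decoding. Every symbol $\msg{\SYM, j, s}$ that $i$ stored was sent or relayed by some party. I would trace each symbol back to its origin, which is a party that executed the ``learning the output'' block before $i$ decodes.

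By the minimality of the chosen moment, each such origin party learned an output $u$ that was output from $\Pi$. Since $\Pi$ admits only equal outputs, $u$ is the unique $\Pi$ output $v$. Hence every symbol $\tilde{s}_j$ that $i$ stored is the $j$-th symbol of $\mathsf{Enc}(v)$. Each index $j$ is carried only by $\SYM$ messages with index $j$, and all of these contain the same correct symbol. So decoding from $\ceil{|G|n^{-1/d}}$ correct symbols yields $v$, which is a contradiction.

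Omission faults preserve content, so faulty parties also send only correct symbols. I would also check the relay on Line \ref{line:relay}: a party $g_i$ relays $s$ only after receiving $\msg{\SYM,i,s}$, and that message came from a party that knows $v$. So relayed symbols are correct too.

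\emph{Totality.} Suppose a correct party $i$ outputs $v$ from $\Pi$. Then $i$ runs the ``learning'' block and sends each $g_j \in G$ the symbol $s_j$ before terminating. This message is eventually delivered to every correct $g_j$, because $i$ is correct and general-omission faults at the receiver only affect faulty receivers.

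Each correct $g_j$ then broadcasts $s_j$ to all parties. It does so either on Line \ref{line:relay}, or in its own learning block if it learned $v$ first. The $\mathit{relayed}$ flag only suppresses a second broadcast, so I would check that at least one of these two broadcasts always happens. In the learning block the broadcast is guarded by $\mathit{relayed}=\mathsf{false}$, so if the guard fails, the relay already went out. I would also confirm that a party which has already terminated still relays. This needs care: after termination a party stops running. I would handle it by noting that a terminating $g_j$ has itself broadcast $s_j$ in its learning block unless it had already relayed.

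By Proposition \ref{denseprop}, for the actual corrupt set $Z\in\Z$ at least $\ceil{|G|n^{-1/d}}$ parties of $G$ are correct, since the bound $|G\setminus Z| \geq |G|n^{-1/d}$ holds and the left side is an integer. So every correct party eventually receives that many distinct correct symbols, unless it has already terminated. It then decodes $v$ by the safety part and terminates.

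\emph{Main obstacle.} I expect the main obstacle to be the relay and termination bookkeeping. I must ensure that each correct $g_j$ broadcasts its symbol exactly when needed, even if it terminated before receiving $s_j$ from $i$. I must also ensure that a party decoding with exactly $\ceil{|G|n^{-1/d}}$ symbols never counts an index twice; this holds because $\tilde{s}_j$ is merely overwritten, always with the same correct value.
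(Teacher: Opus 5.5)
Your proposal is correct and follows essentially the same route as the paper. Safety shows that the first party to learn a $\Term$ output must have obtained it from $\Pi$, which you phrase as a minimal-counterexample trace of every $\SYM$ message back to a party that ran the learning block. Totality uses the same argument: the correct party hands each $g_j$ its symbol, and the at least $\ceil{|G|n^{-1/d}}$ correct members of $G$ each broadcast exactly once, via the relay line or their own learning block. Your version is slightly more complete than the paper's, because your induction explicitly shows that every circulating symbol is the correct symbol of $\mathsf{Enc}(v)$ for the unique $\Pi$ output $v$, which the paper leaves implicit.
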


\begin{proof}
    Let $v$ be the uniquely determined value such that the parties can only output $v$ from $\Pi$. If $v$ does not exist at some point in time (i.e.\ if $\Pi$ has not reached univalency), then no one \linebreak has output from $\Pi$. This prevents the parties from terminating $\Term$. The reason is that the first $\SYM$ message in a $\Term$ execution must be sent by a party who already knows the output $v$ it will terminate $\Term$ with, which means that if there is a first party who outputs some $v$ from $\Term$, then it decides on $v$ not via decoding $\SYM$ messages but via outputting $v$ from $\Pi$.

    Now, suppose that some correct party $i$ outputs $v$ from $\Pi$. Then, this party $i$ will compute  $(s_{1}, \dots, s_{|G|}) \gets \mathsf{Enc}_{|G|, \ceil{|G|n^{-1/d}}}(v)$ and send each symbol $s_j$ to the party $g_j$ for all $j \in [|G|]$, thus enabling the party $g_j \in G$ to send everyone the symbol $s_j$. Each party $g_j$ sends everyone the symbol $s_j$ (either once it receives $s_j$ from another party or once it computes $s_j$ just before \linebreak it outputs) and $G$ contains at least $\ceil{|G|n^{-1/d}}$ correct parties; consequently, everyone receives at least $\ceil{|G|n^{-1/d}}$ symbols from the parties in $G$. This is enough for every party to receive $\ceil{|G|n^{-1/d}}$ symbols, decode them into $v$ and finally terminate $\Term$ with the output $v$, unless the party terminates $\Term$ with the output $v$ earlier by outputting $v$ from $\Pi$.
\end{proof}

\subparagraph{The Complexity of $\Term$.} Let $L_\mathsf{out}$ be the length of the output $v$ that the parties obtain from $\Pi$. The communication in $\Term$ consists of at most $2n|G|$ $\SYM$ messages: $n|G|$ for each party $i \in [n]$ to inform each party $g_j \in G$ of the symbol $s_j$, and $n|G|$ for each party $g_j \in G$ \linebreak to inform everyone else of $s_j$. With Reed-Solomon erasure coding, the symbols are of size $O(\frac{L_\mathsf{out}}{|G|n^{-1/d}} + \log n)$; so, the total communication complexity is $O(L_\mathsf{out} \cdot n^{1+1/d} + n|G|\log n) = O(L_\mathsf{out} \cdot n^{1+1/d} + n^2\log n)$ bits. Finally, the latency is $2$; that is, if a correct party $i$ outputs from $\Pi$ at some time $T$ in an execution where the adversary delays each message by at most $1$ time unit, then every correct party outputs by the time $T + 2$. This is because with one hop the party $i$ informs every party $g_j \in G$ of the symbol $s_j$, and with a second hop the correct parties in $G$ all send everyone the symbols they need to learn the output and terminate.

\subparagraph{Byzantine Fault Tolerance.} $\Term$ tolerates general-omission adversaries. While a formal treatment of byzantine fault tolerance is out of scope, we sketch how a byzantine fault tolerant version of $\Term$ could work, assuming the $\Q^d$ condition for any $d \geq 3$. For this, we only need to introduce some mechanism to protect the parties from any incorrect/inconsistent symbols that the byzantine parties might propose, and as we discussed in the previous subsection, this can be done by attaching a Merkle proof to each symbol to prove the symbol correct. These proofs would be relative to the Merkle tree computed on the vector $(s_{1}, \dots, s_{|G|})$. To unanimously establish the root hash, the parties would also run a simple quadratic-complexity reliable agreement protocol (with this requiring the $\Q^3$ condition) in parallel, and terminate it before they terminate $\Term$. A party's reliable agreement input would be the root hash $h$ which it can compute upon outputting $v$ from $\Pi$.\footnote{Reliable agreement \cite{ddlmrs24} is a termination protocol like $\Term$. We would use it here to ensure that if some correct party terminates it with the root hash, then all of them terminate it with the root hash. While the reliable agreement protocol in \cite{ddlmrs24} is for $t < n/3$ threshold byzantine adversaries, it is easy to adapt it to the $\Q^3$ setting with a standard transformation \cite{kf05}: Replace any line which says ``upon receiving a message $m$ from $k$ parties'' with ``upon receiving $m$ from a set of $S$ parties such that $P_k(S)$ holds,'' where $P_k(S)$ is the predicate for whether $S \not \in \Z$ if $k = t + 1$, and for whether $[n] \setminus S \in \Z$ if $k = n - t$.}

\section{Discussion \& Future Work} \label{discuss}

In this work, we proved a general $\Omega(L_{\mathsf{out}} \cdot n^{1+1/d})$ bit communication complexity lower bound \linebreak against $\Q^d$ adversaries for a variety of fundamental agreement tasks; assuming the requirement of error-free security against byzantine adversaries in the synchronous setting and assuming termination against send-omission adversaries in the asynchronous setting.

\subparagraph{Synchronous Tightness.} An open question we leave for future work is whether the lower bounds we have proven for error-free synchronous protocols are tight (not necessarily just for $\Z_\pj^{n,d}$ but for $\Q^d$ adversaries in general), and if so, to come up with protocols that match them. While the combination of byzantine faults and error-free security might make it quite difficult to design such protocols, we conjecture that our lower bounds for synchronous protocols are tight like their asynchronous counterparts.

\subparagraph{Las Vegas Protocols.} To keep our proofs clear in the synchronous setting, we assumed a fixed round complexity of $R$. However, our entropy-based arguments extend to all error-free synchronous protocols no matter the input length $L \geq 1$ or the round complexity, even for Las Vegas style  protocols with variable round complexities. That is, one could by adapting our lower bound proofs show that the expected sum $\E[\sum_{i = 1}^n\sum_{j = i + 1}^nH(T_{i \leftrightarrow j})]$ (where $T_{i \leftrightarrow j}$ is the transcript describing the communication between $i$ and $j$) is $\Omega(Ln^{2+1/d})$ for interactive consistency and $\Omega(Ln^{1+1/d})$ for byzantine agreement/broadcast, no matter $L$ or the round complexity.\footnote{Note that the reduction from byzantine broadcast to interactive consistency (Corollary \ref{babbcorr}) breaks if the byzantine broadcast protocol does not have a fixed round complexity $R$, as our strategy to cleanly separate the broadcast instances was to allot a block of $R$ rounds to each instance. To address this issue without needing protocol ID tags, we can instead employ a round-robin schedule, where for each $k \in [n]$ \linebreak the parties run the $k^{\text{th}}$ byzantine broadcast instance in a round $r$ iff $r \equiv k \pmod n$.} Less abstractly, these are the correct communication complexity lower bounds for any error-free synchronous protocol in a tagged communication model where each message is tagged with the pertinent metadata (sender/receiver IDs and a round number), no matter the round complexity. Intuitively speaking, the reason is that when these tags are included in the messages, a transcript $T_{S \leftrightarrow i}$ between a party set $S$ and a party $i$ becomes just a message sequence, and a lower bound of $H(T_{S \leftrightarrow i}) \geq h = \Omega(1)$ for the transcript entropy also implies a lower bound of $\Omega(h)$ for the expected total length of the messages.

\subparagraph{Termination with High Probability.} Another class of protocols which merit attention are asynchronous protocols that terminate with high probability, but not probability $1$. While we chose not to formalize these protocols to avoid greatly complicating the math, we conjecture that our lower bounds for asynchronous protocols generalize to them. A way to adapt our proofs to show this formally could be to use events which track if some parties or party sets terminate as they should, and only lower bound the communication complexity of the good executions where the events all occur. One would also have to condition the input distribution on the events occurring and account for the resulting input distribution entropy loss, as the posterior distribution of the inputs when the parties terminate may be non-uniform.

\subparagraph{Balanced Communication.} An exciting question which we did not consider in this work is how balanced the communication can be. In the threshold adversary setting, it is considered good for a protocol to be balanced. Such a protocol ensures that if there are $c$ correct parties in an execution, then each of them shoulders a $O(\frac{1}{c})$ proportion of the total communication complexity. In non-symmetric protocols (where the parties take on different roles), achieving balance can be quite challenging. For example, a balanced reliable broadcast protocol which costs $O(Ln + n^2\log n)$ bits \cite{long22} cannot allow the sender to just send its input to everyone else. We are curious about when general adversary tolerant protocols (and protocols like $\PullCast$ which can tolerate $t < n$ faults) can be balanced. As far as we are aware, this question has not been studied before. The answer could depend on whether the network is synchronous or not, on the adversary, and on whether termination is required if the network is asynchronous.

\subparagraph{Crash Faults.} A question one may ask is: Do our lower bounds for terminating asynchronous protocols also hold against crash faults? The answer is no, at least not for reliable broadcast, if the adversary must deliver the messages that a crashing party sent before crashing. To show this, we sketch a simple terminating reliable broadcast protocol that costs $L(n-1) + O(n^2)$ bits and can tolerate any number of crash faults. The sender sends everyone its $L$-bit input, and after this it reliably broadcasts the string $\texttt{OK}$ with a reliable broadcast instance $\mathcal{RB}_\mathsf{ok}$ (using the echo-based protocol in \cite{ht94}) to signal to the other parties that it did not crash before it sent everyone its input. Then, a party can terminate the full protocol after terminating $\mathcal{RB}_\mathsf{ok}$ and receiving the sender's message. If the sender is correct, then every correct party receives its input and terminates $\mathcal{RB}_\mathsf{ok}$. If the sender crashes before it can send everyone its input, then it never initiates the broadcast $\mathcal{RB}_\mathsf{ok}$, and therefore the parties never terminate $\mathcal{RB}_\mathsf{ok}$. And finally, if the sender crashes after it sends everyone its input, then the correct parties either all terminate $\mathcal{RB}_\mathsf{ok}$ or all not terminate it, and in the former case they all receive the sender's input. An implication of this protocol is that Locher's communication complexity lower bound of $(1.5 - o(1))Ln$ bits for $3$-round reliable broadcast against $t < \frac{n}{3}$ crash faults when the sender is only allowed to send $o(Ln)$ bits in the first round\footnote{The lower bound requires that when the sender is correct, a synchronous implementation of the protocol requires at most $3$ rounds for the correct parties to output. The protocol we gave satisfies this guarantee.} \cite{locher24} no longer holds if the last restriction is lifted.

\subparagraph{\mathversion{bold}The $1.5Ln$ Bound.} We showed in this work by designing $\PullCast$ that for any fixed $\varepsilon > 0$, one can solve non-terminating reliable broadcast with $(1 + \varepsilon)Ln$ bits of communication when the input length $L$ is sufficiently large. This is of interest in light of Locher's $O(1.5 - o(1))Ln$ bit lower bound \cite{locher24}, which he conjectured might extend beyond the class of reliable broadcast protocols for which he proved the bound. As we explained in the previous paragraph, the bound does not extend to all crash fault tolerant reliable broadcast protocols, and $\PullCast$ shows that non-terminating protocols can beat this bound even against stronger adversaries. So, the question is: To what protocols does this bound generalize? Based on the research we have done since this paper's acceptance, we have come to believe that the lower bound holds for all well-balanced reliable broadcast protocols (e.g.\ Locher and Shoup's MiniCast \cite{ls25}), where the expected number of bits sent by each party is roughly equal across all $n$ parties. We intend to formally address this question in future work, as there are subtleties regarding termination and whether crash faults suffice to induce the lower bound.

\bibliography{refs}

\appendix

\section{Information Theory Lemmas} \label{apxsec}

In this section, we prove the information-theoretic lemmas we use to obtain our lower bounds. We begin with Lemma \ref{termlemma}, which is about how much communication a party $i$ must receive from a set of parties $S$ to learn the output $X$ of an asynchronous protocol from them.

\termlemma

\begin{proof}
    Firstly, we assume $|\mathsf{supp}(X)| \geq 2$. If not, then $H(X) = 0$, and the lemma is trivial.

    Let $s = H(X)$. Since $X$ is uniformly random, we have $s = \log_2(|\mathsf{supp}(X)|) \geq 1$. Hence, the fact that $\Pr[Y_i = X] \geq 0.5 + \varepsilon$ gives us $H(X \mid Y_i) \leq H_b(0.5 - \varepsilon) + (0.5 - \varepsilon)\log_2(2^s - 1)$ by Fano's inequality \cite{cover2006elements}, where $H_b(0.5 - \varepsilon) < 1$ is the binary entropy of $(0.5 - \varepsilon)$. Therefore, the mutual information $I(X \mathbin; Y_i)$ is at least $H(X) - H(X \mid Y_i) \geq s - H_b(0.5 - \varepsilon) - (0.5 - \varepsilon)\log_2(2^s - 1) = \Omega(s)$. This function is positive for all $s \geq 1$ because it equals $1 - H_b(0.5 - \varepsilon) > 0$ when $s = 1$ and has a positive derivative in $[1, \infty)$, which means that it is lower bounded by $\varepsilon's = \varepsilon'H(X)$ for some absolute constant $\varepsilon' > 0$.

Observe that $X \rightarrow (T_{S \rightarrow i}, W_i) \rightarrow Y_i$ is a Markov chain, where the variable $W_i$ represents $i$'s initial state (its input if it has any, and its local randomness). The reason is that when $i$ decides to output from the protocol, it computes its output $Y_i$ deterministically based on $T_{S \rightarrow i}$ and $W_i$. So, the data processing inequality \cite{cover2006elements} gives us $I(X \mathbin; T_{S \rightarrow i}, W_i) \geq I(X \mathbin; Y_i) \geq \varepsilon'H(X).$ Moreover, because $X$ and $W_i$ are independent, the chain rule for mutual information gives us $I(X \mathbin; T_{S \rightarrow i}, W_i) = I(X \mathbin; W_i) + I(X \mathbin; T_{S \rightarrow i} \mid W_i) = I(X \mathbin; T_{S \rightarrow i} \mid W_i)$. Consequently, we have $H(T_{S \rightarrow i}) \geq H(T_{S \rightarrow i} \mid W_i) \geq I(X \mathbin; T_{S \rightarrow i} \mid W_i) \geq \varepsilon'H(X)$, as desired.
\end{proof}

Next, we have the lemmas regarding the conversion from transcript entropy to expected communication complexity. They are as follows.

\begin{lemma} \label{sync-cov-lemma}
    Let $i$ be any party, and let $S$ be any non-empty set of parties excluding $i$. If a synchronous $R$-round protocol is such that in its random fault-free executions the bidirectional transcript $T_{S \leftrightarrow i}$ which describes the communication between $S$ and $i$ has at least $\varepsilon R|S|$ bits of entropy for some constant $\varepsilon > 0$, then for some constant $\varepsilon' > 0$ these executions involve at least $\varepsilon'H(T_{S \leftrightarrow i})$ bits of expected communication between $S$ and $i$.
\end{lemma}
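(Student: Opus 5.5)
The plan is an encoding argument: turn the random transcript $T_{S \leftrightarrow i}$ into a prefix-free string whose length is at most a constant times the actual number of bits exchanged, plus an overhead charged only to nonempty messages, plus a small term. Source coding then says the expected length of any uniquely decodable encoding of $T_{S\leftrightarrow i}$ is at least $H(T_{S\leftrightarrow i})$. The hypothesis $H(T_{S \leftrightarrow i}) \geq \varepsilon R|S|$ lets us absorb the additive term, which is $O(R|S|)$ in the worst case.

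First, fix the format. In an $R$-round protocol, the transcript consists, for each round $r\in[R]$ and each ordered pair in $\{(j,i),(i,j) : j\in S\}$, of the (possibly empty) message sent that round. This is an array of $2R|S|$ cells, and the cell index is exactly the metadata (round, direction, partner). We encode the array as follows:
\begin{enumerate}
\item Write a presence bitmap of $2R|S|$ bits saying which cells are nonempty.
\item For each nonempty cell, in index order, write its content $m$ with a self-delimiting code of length $|m| + O(\log(|m|+1)) + O(1) = O(|m|+1)$ bits, e.g.\ Elias-gamma of the length followed by $m$.
\end{enumerate}
If empty and nonempty-zero-length messages must be distinguished, treat a message as at least one bit; otherwise empty means no message. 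The encoding is injective and prefix-free given $R$ and $|S|$, which are fixed by the protocol. Its length is therefore at most $2R|S| + c\sum_{j\in S}(B_{j\to i}+B_{i\to j})$ for an absolute constant $c$, using that each nonempty message carries at least one bit, so its $O(1)$ overhead is charged to its bits.

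Taking expectations over $\V$ and $\R$, source coding gives
\[H(T_{S\leftrightarrow i}) \le 2R|S| + c\,\E\Bigl[\sum_{j\in S}(B_{j\to i}+B_{i\to j})\Bigr].\]

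The main obstacle is the bitmap's additive $2R|S|$ term. It is not proportional to communication, and it is precisely the silence-encodes-information issue the paper discusses. A crude absorption does not work: $H \geq \varepsilon R|S|$ only gives $2R|S| \le (2/\varepsilon)H$, which may exceed $H$. So I would shrink the bitmap's cost and make it depend on the number $k$ of nonempty cells. Encode $k$ (in $O(\log(R|S|))$ bits) and then the subset of positions, which takes $\log\binom{2R|S|}{k} \le k\log(2eR|S|/k)$ bits. This is not yet linear in $k$, so I would instead use a scaled bitmap. Choose a constant $\alpha$ depending on $\varepsilon$, for example $\alpha=\varepsilon/4$, and consider two cases:
\begin{itemize}
\item If $k \ge \alpha R|S|$, the positions cost at most $2R|S| \le (2/\alpha)k$ bits, hence $O(k)$, hence $O(\text{bits})$.
\item If $k<\alpha R|S|$, the positions cost $k\log(2eR|S|/k)$. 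Bounding this by $\varepsilon H/2$ plus $O(\text{bits})$ needs care. Entropy concavity gives $\E[k\log(2eR|S|/k)] \le \E[k]\log(2eR|S|/\E[k])$, and the function $x\log(2eN/x)$ with $N=R|S|$ is at most $\tfrac12\varepsilon N$ once $x\le\alpha N$ for a suitably small $\alpha$. Since $H \ge \varepsilon N$, this is at most $H/2$.
\end{itemize}

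In both cases,
\[H \le H/2 + O\Bigl(\E\Bigl[\sum_{j\in S}(B_{j\to i}+B_{i\to j})\Bigr]\Bigr) + O(\log(R|S|)).\]
This gives $\E[\text{bits}] \ge \varepsilon' H$. The leftover $\log$ term is absorbed because $H \ge \varepsilon R|S|$ dominates $\log(R|S|)$ for large $R|S|$, while the small-case constants are absorbed into $\varepsilon'$. To combine the two cases in expectation rather than pointwise, I would use a single encoding that picks the cheaper bitmap representation, flagged with one bit.
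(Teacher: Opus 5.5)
\textbf{Verdict: different route from the paper; sound for large entropy, but the constant-entropy case is not handled.}

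Your route is genuinely different from the paper's. The paper never builds a global code. It uses subadditivity over the $2R|S|$ cells together with the per-cell estimate $H(M)\le \E[|M|]+H(|M|)\le f(\E[|M|])$ of Proposition~\ref{bitlengthlemma}, where the length entropy is bounded via the geometric maximum-entropy distribution. It then discards low-entropy cells in Lemma~\ref{arraylemma}. You instead write one prefix-free code for $T_{S\leftrightarrow i}$ and pay for the silence pattern explicitly with a sparse-subset encoding. This makes it transparent that the hypothesis $H\ge\varepsilon R|S|$ is only used to pay for naming the non-empty cells.

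Write $N=R|S|$, $H=H(T_{S\leftrightarrow i})$ and $\mu=\E[\sum_{j\in S}(B_{j\to i}+B_{i\to j})]$. Your code does give $H/2\le C_\varepsilon\mu+\log_2(2N+1)+O(1)$. Since $N\le H/\varepsilon$, the additive term is at most $H/4$ once $H\ge H_0(\varepsilon)$, so the large-entropy case is fine. Two minor points:
\begin{itemize}
\item $\alpha=\varepsilon/4$ does not satisfy $\alpha\log_2(2e/\alpha)\le\varepsilon/2$ for small $\varepsilon$, so take $\alpha$ smaller, as you later say.
\item For $k<\alpha N$ the bound $k\log_2(2eN/k)\le\varepsilon N/2$ holds pointwise, so Jensen is not needed.
\end{itemize}

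The gap is your closing claim that the small-case constants are absorbed into $\varepsilon'$. Consider the regime $\varepsilon\le H<H_0(\varepsilon)$, which forces $N\le H_0/\varepsilon$. There your inequality can give no lower bound on $\mu$ at all; for instance, whenever $H<2$ the flag bit alone already exceeds $H/2$. More fundamentally, $H\le\E[\ell]$ says nothing when $H<1$, because every prefix-free code with two or more codewords has $\E[\ell]\ge 1$. Absorbing into $\varepsilon'$ presupposes that $\mu\ge\mu_0(\varepsilon)>0$ whenever $H\ge\varepsilon$. That requires an estimate $H\le\phi(\mu)$ with $\phi(x)\to 0$ as $x\to 0$, which is exactly what the paper's $f$ (with $f(0^+)=0$) supplies and what your argument lacks.

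It is easy to add within your framework:
\begin{itemize}
\item Let $p$ be the probability that the transcript is non-empty. Then $p\le\mu$, because every message has at least one bit.
\item Apply your plain bitmap code conditionally on non-emptiness. This gives $H=H_b(p)+p\,H(T_{S\leftrightarrow i}\mid\text{non-empty})\le H_b(\min(\mu,\tfrac12))+(2N+c)\mu$.
\item With $N\le H_0/\varepsilon$ and $H\ge\varepsilon$, this forces $\mu\ge\mu_0(\varepsilon)>0$. Hence $\mu\ge(\mu_0/H_0)H$ in the small regime.
\end{itemize}
Taking the minimum of this constant and your large-regime constant finishes the proof.
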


\begin{lemma} \label{async-cov-lemma}
    Let $i$ be any party, and let $S$ be any non-empty set of parties excluding $i$. If an asynchronous protocol is such that in its random fault-free executions the one-way transcript $T_{S \rightarrow i}$ which describes the communication from $S$ to $i$ has at least $\varepsilon|S|$ bits of entropy for some constant $\varepsilon > 0$, then in these executions the parties in $S$ send at least $\varepsilon'H(T_{S \rightarrow i})$ bits to $i$ in expectation for some constant $\varepsilon' > 0$, assuming the adversary is such that for any two parties \linebreak $j < j' \in S$ it never delivers a message from $j$ to $i$ after delivering it a message from $j'$.
\end{lemma}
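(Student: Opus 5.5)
Since the statement is the last one displayed, it is Lemma \ref{async-cov-lemma}. My plan is to build an injective, prefix-free-style encoding of the one-way transcript $T_{S \rightarrow i}$ whose length is at most a constant times $B + |S|$. Here $B = \sum_{j \in S} B_{j \rightarrow i}$ is the number of raw bits $S$ sends to $i$. Then I will use the source coding bound $H(T) \leq \E[\text{length of a uniquely decodable code}]$, up to additive terms.

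The scheduling assumption is what makes the encoding cheap. Under it, the messages $i$ receives from $S$ arrive grouped by sender in increasing ID order. All messages from the smallest $j \in S$ come first, then those from the next one, and so on. So the interleaving carries no information. The transcript is determined by the ordered list, over $j \in S$ in increasing order, of the message sequences $m_{j,1}, m_{j,2}, \dots$ sent from $j$ to $i$. (The content $i$ can use is only sender identity and order, since there are no rounds.) I will encode each message string $m$ with a self-delimiting code of length $2|m| + 2$, for example by doubling each bit and appending a terminator pair. Each sender's block will end with an end-of-block marker costing $O(1)$ bits. Because $S$ is fixed and known, sender IDs need not be written; block boundaries identify them. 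The total length is then at most $2B + c|S|$ plus a term proportional to the number of messages. Empty messages carry the metadata cost: the number of messages could exceed $B$ when messages are empty. I will handle this by assuming every message has at least one bit, the standard convention. Otherwise I would charge each message $1$ bit of communication, which is how the paper's metadata discussion suggests measuring it.

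Next, since the encoding is uniquely decodable (prefix-free when concatenated), Kraft/Shannon gives $H(T_{S \rightarrow i}) \leq \E[\ell(T)] \leq 2\E[B] + c|S|$. If $H(T) \geq \varepsilon|S|$, then $c|S| \leq (c/\varepsilon) H(T)$. Choosing constants so that $c/\varepsilon$ is absorbed is the main obstacle. A naive bound gives $\E[B] \geq (H(T) - c|S|)/2$, which can be negative unless $\varepsilon > c$. To fix this I will shrink the per-sender overhead. Instead of per-sender markers, I will encode the vector of message counts $(k_j)_{j \in S}$ jointly, and separately bound the entropy of the count vector. When most $k_j = 0$ this costs little. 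But it still costs about $|S|$ bits when each sender sends one message, so the cleaner route is different. I will note that a sender with $k_j \geq 1$ contributes at least one bit to $B$, so all per-sender and per-message overhead is $O(B)$ plus an $O(1)$-per-silent-sender flag. Then I will use the weaker conclusion that $H(T) \leq C(\E[B] + 1) + \E[\#\text{senders}]$ is paid for by $B$. That yields $H(T) \leq C'\E[B] + C''$. With $H(T) \geq \varepsilon|S| \geq \varepsilon$, the additive constant is absorbed, giving $\E[B] \geq \varepsilon' H(T)$. Getting this bookkeeping right, so that silent senders cost only $O(1)$ total rather than $O(|S|)$, is the step I expect to need care. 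For that step I will first try to encode the set of active senders with a length proportional to the number of active senders times $\log$ of the gap, bounded via concavity by $O(B + 1)$.
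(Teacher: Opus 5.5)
\textbf{There is a genuine gap in how you handle silent senders.} Your global source-coding strategy is reasonable, and you locate the difficulty correctly, but the fix you propose cannot work.

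The bound you aim for, $H(T_{S \rightarrow i}) \leq C'\E[B] + C''$ with absolute constants, is false. Suppose a uniformly random member of $S$ sends $i$ one uniformly random bit and everyone else in $S$ stays silent. Then $B = 1$ always, while $H(T_{S \rightarrow i}) = \log_2|S| + 1$. For the same reason, your gap encoding does not cost $O(B+1)$. With $a$ active senders, concavity only bounds the total gap cost by about $a\log_2(|S|/a)$, which is $\Theta(\log|S|)$ when $a = B = 1$. The \emph{number} of active senders is paid for by $B$, but their \emph{identities} are not: they can carry about $B\log_2(|S|/B)$ bits of information. So no bookkeeping makes silent senders cost $O(1)$ in total. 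The lemma holds only because its hypothesis $H(T_{S \rightarrow i}) \geq \varepsilon|S|$ rules out the regime $\E[B] \ll |S|$. Your argument uses that hypothesis only in the weak form $H \geq \varepsilon$, to absorb an additive constant, and this cannot suffice.

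\textbf{The missing step} is to use the hypothesis at scale $|S|$ to force $\E[B] = \Omega(|S|)$ first, and only then absorb the logarithmic overhead. This can be done within your framework:
\begin{itemize}
\item Let $A[j]$ be the concatenation of the messages from the $j$-th member of $S$, and let $\ell_j = |A[j]|$. Under the scheduling assumption, $H(T_{S \rightarrow i}) \leq H(\ell_1,\dots,\ell_{|S|}) + \E[B] \leq \E[B] + \sum_j h(\E[\ell_j])$.
\item Here $h(\mu) = (\mu+1)\log_2(\mu+1) - \mu\log_2\mu$ is the maximum entropy of a nonnegative integer with mean $\mu$. Note that $h(\mu) = \mu\log_2(1/\mu) + O(\mu)$ as $\mu \to 0$, which is exactly the silent-sender cost.
\item By concavity of $h$, the right-hand side is at most $|S|\,f(\E[B]/|S|)$ with $f(x) = x + h(x)$.
\item The hypothesis then gives $\E[B]/|S| \geq \mu_0 = f^{-1}(\varepsilon)$.
\item Since $x/f(x)$ is increasing, $\E[B] \geq \frac{\mu_0}{\varepsilon}H(T_{S \rightarrow i})$.
\end{itemize}

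The paper reaches the same conclusion cell by cell rather than through global concavity. Proposition~\ref{bitlengthlemma} proves $H(M) \leq f(\E[|M|])$ for a single string and extracts a constant ratio once $H(M)$ exceeds a threshold. Lemma~\ref{arraylemma} then discards the cells with $H(A[j]) < \varepsilon/2$, which together carry at most half of $\sum_j H(A[j])$, and applies the proposition to the remaining cells.
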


Note the restriction to the adversary's scheduling strategy in Lemma \ref{async-cov-lemma}. This restriction will allow us to encode transcripts more efficiently. We can assume this restriction since in our asynchronous protocol lower bound proofs, the adversary always causes the parties in $S$ to terminate before it delivers any of their messages to $i$. Hence, the adversary can abide by the restriction by sorting the messages before delivering them, and the parties in $S$ will not later on break the assumption by sending more messages.

To prove these lemmas, we first prove a technical proposition about the relation between the entropy of a random string and its expected length.

\begin{proposition}\label{bitlengthlemma}
    For any random bitstring $M$ of a random length $|M|\geq 0$, if $H(M) \geq \varepsilon$ for some constant $\varepsilon > 0$, then $\E[|M|] \geq \varepsilon'H(M)$ for some constant $\varepsilon' > 0$.
\end{proposition}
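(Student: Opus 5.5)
We bound $H(M)$ from above by a function of $\E[|M|]$ and then invert. The natural tool is to view a bitstring as a pair $(|M|, \text{content})$. By the chain rule, $H(M) = H(|M|) + H(M \mid |M|)$. Conditioned on $|M| = \ell$, $M$ ranges over at most $2^\ell$ strings, so $H(M \mid |M|) \leq \sum_\ell \Pr[|M| = \ell]\,\ell = \E[|M|]$. It then remains to bound $H(|M|)$, the entropy of a nonnegative integer random variable, in terms of its mean $\mu = \E[|M|]$.

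For this step I would use the standard maximum-entropy fact: among distributions on $\{0,1,2,\dots\}$ with mean $\mu$, the geometric distribution maximizes entropy. This gives
\[
H(|M|) \leq (\mu+1)\log_2(\mu+1) - \mu\log_2\mu .
\]
Equivalently, $H(|M|) \leq \log_2(\mu+1) + \mu\log_2(1 + 1/\mu) \leq \log_2(\mu+1) + \log_2 e$. If one prefers to avoid the max-entropy lemma, Gibbs' inequality against the reference distribution $q(\ell) = 2^{-\ell-1}$ gives a cruder bound: $H(|M|) \leq \sum_\ell p(\ell)(\ell+1) = \mu + 1$. This combines to $H(M) \leq 2\mu + 1$.

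The crude bound already nearly suffices, except for the additive constant, and handling that constant is the main obstacle. When $\mu$ is tiny, $H(M)$ is only known to be at least the constant $\varepsilon$. So we must show that $\mu$ cannot be $o(1)$ while $H(M) \geq \varepsilon$. The max-entropy bound handles this: the function $f(\mu) = (\mu+1)\log_2(\mu+1) - \mu\log_2\mu$ tends to $0$ as $\mu \to 0$ and is continuous and increasing. Hence $H(M) \leq \mu + f(\mu)$, where $\mu + f(\mu) \to 0$ as $\mu \to 0$. Therefore $H(M) \geq \varepsilon$ forces $\mu \geq \mu_0(\varepsilon) > 0$ for some constant $\mu_0$.

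Finally I would split into cases. If $\mu \geq \mu_0$, then $H(M) \leq \mu + \log_2(\mu+1) + \log_2 e \leq \mu(1 + 1 + \log_2 e / \mu_0)$, using $\log_2(\mu+1) \leq \mu$ for $\mu \geq 1$ and handling $\mu \in [\mu_0, 1]$ with the constant directly. This yields $H(M) \leq C\mu$ with $C = C(\varepsilon)$, so $\E[|M|] \geq H(M)/C$, and we take $\varepsilon' = 1/C$. The case $\mu < \mu_0$ is excluded by the previous paragraph.
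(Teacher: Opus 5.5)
Your proof is correct and follows the paper's route: the chain rule over the length gives $H(M)\le \mu + H(|M|)$, the geometric max-entropy bound controls $H(|M|)$, and the continuity and monotonicity of $\mu+(\mu+1)\log_2(\mu+1)-\mu\log_2\mu$, which tends to $0$ as $\mu\to 0$, force $\mu\ge\mu_0(\varepsilon)>0$. The only difference is in the last step: the paper uses monotonicity of $x/f(x)$ to get $\varepsilon'=\mu_0/\varepsilon$, while you use an explicit linear bound for $\mu\ge\mu_0$, where your crude estimate $H(M)\le 2\mu+1\le(2+1/\mu_0)\mu$ would already suffice.
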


\begin{proof}
    Firstly, note that $\E[|M|] > 0$. Else, $M$ would be the empty string, with $H(M) = 0$.
    
    Let $L = |M|$ (a random variable itself), let $\supp(L) = \{\ell \in \mathbb{N}:\Pr[L = \ell] > 0\}$, and let $\mu = \E[L] > 0$. For any $\ell \in \supp(L)$, we have $H(M\,|\,L = \ell) \leq \ell$ as the maximum-entropy way to choose an $\ell$-bit string is to choose it uniformly at random, with $\ell$ bits of entropy. Hence, $H(M\,|\,L) = \sum_{\ell \in \supp(L)}\Pr[L = \ell] \cdot H(M\,|\,L = \ell) \leq \sum_{\ell \in \supp(L)}\Pr[L = \ell] \cdot \ell = \mu$. Finally, the chain rule gives us $H(M) \leq H(M, L) = H(M\,|\,L) + H(L) \leq \mu + H(L)$.

    The maximum-entropy way to randomly choose $L \in \{0,1,\dots\}$ given its expected value $\mu$ is to choose $L$ according to a geometric distribution with $\Pr[L = \ell] = \frac{\mu^\ell}{(\mu + 1)^{\ell + 1}}$ for all $\ell \geq 0$ \cite{kapur1989maximum}, which means that $H(L) \leq -\sum_{\ell = 0}^\infty \frac{\mu^\ell}{(\mu + 1)^{\ell + 1}}\log_2(\frac{\mu^\ell}{(\mu + 1)^{\ell + 1}}) = (\mu + 1)\log_2(\mu + 1) - \mu\log_2 \mu$. Hence, we have $H(M) \leq \mu + H(L) \leq \mu + (\mu + 1)\log_2(\mu + 1) - \mu\log_2 \mu$.

    Let $f(x) = x + (x+1)\log_2(x + 1) - x\log_2 x$, and let $g(x) = \frac{x}{f(x)}$. What we would like to show is that $H(M) \geq \varepsilon \implies \mu \geq \varepsilon'H(M)$. We have $H(M) \leq f(\mu)$, which implies that $\mu = g(\mu) \cdot f(\mu) \geq g(\mu) \cdot H(M)$. So, our goal is to find some $\varepsilon' > 0$ which depends only on $\varepsilon$ such that $g(\mu) \geq \varepsilon'$. As $f(x)$ is continuously and strictly increasing in $(0, \infty)$ without an upper bound and $\lim_{x\rightarrow 0^+}f(x) = 0$, $f$ has an inverse $f^{-1} : (0, \infty) \rightarrow (0, \infty)$. Let $\mu_0 = f^{-1}(\varepsilon)$. \linebreak We have $f(\mu) \geq H(M) \geq \varepsilon = f(\mu_0)$, which means that $\mu \geq \mu_0$ because $f$ is increasing. We can thus set $\varepsilon' = g(\mu_0) = \frac{\mu_0}{\varepsilon}$ and conclude that $\mu \geq \varepsilon'H(M)$. This choice works because $g$ is also an increasing function in $(0, \infty)$, which means that $g(\mu) \geq g(\mu_0) = \varepsilon'$.

    Note that the $\varepsilon' = \frac{f^{-1}(\varepsilon)}{\varepsilon}$ we get from this argument approaches $1$ as $\varepsilon \rightarrow \infty$, due to the fact that $f(x) \approx x \approx f^{-1}(x)$ for large values of $x$. 
\end{proof}

We use Proposition \ref{bitlengthlemma} to prove a generalization of it for arrays of random bitstrings.

\begin{lemma} \label{arraylemma}
    For any $N \geq 1$, let $M_1, \dots, M_N$ be any sequence of $N$ variable-length random bitstrings (which are possibly empty) such that $\sum_{i = 1}^NH(M_i) \geq \varepsilon N$ for some constant $\varepsilon > 0$. Then $\E[\sum_{i = 1}^N|M_i|] \geq \varepsilon'\sum_{i = 1}^NH(M_i)$ for some constant $\varepsilon' > 0$.
\end{lemma}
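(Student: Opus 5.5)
The plan is to reduce to Proposition \ref{bitlengthlemma}, using the bound it actually proves rather than its statement. Let $h_i = H(M_i)$ and $\mu_i = \E[|M_i|]$. The proof of Proposition \ref{bitlengthlemma} shows that for every $i$, with no condition on $h_i$,
\[
h_i \leq f(\mu_i), \qquad f(x) = x + (x+1)\log_2(x+1) - x\log_2 x ,
\]
where $f(0)=0$ by continuity and $f$ is increasing. I would apply this inequality per coordinate instead of applying the proposition itself. The proposition needs $H(M_i) \geq \varepsilon$, which can fail for individual indices; only the average $\frac{1}{N}\sum_i h_i$ is known to be at least $\varepsilon$.

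The key step is to show that $f$ is concave on $[0,\infty)$. Differentiating gives
\[
f'(x) = 1 + \log_2(x+1) - \log_2 x = 1 + \log_2\!\left(1 + \tfrac{1}{x}\right),
\]
which is decreasing in $x$. Concavity and Jensen's inequality then give
\[
\frac{1}{N}\sum_i h_i \;\leq\; \frac{1}{N}\sum_i f(\mu_i) \;\leq\; f\!\left(\frac{1}{N}\sum_i \mu_i\right).
\]
Set $\bar h = \frac{1}{N}\sum_i h_i \geq \varepsilon$ and $\bar\mu = \frac{1}{N}\sum_i \mu_i$. This yields $f(\bar\mu) \geq \bar h \geq \varepsilon$, which is exactly the single-variable situation handled at the end of the proof of Proposition \ref{bitlengthlemma}.

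I would then repeat the final argument of that proof. Since $f$ is increasing, $\bar\mu \geq \mu_0 = f^{-1}(\varepsilon)$. Since $g(x) = x/f(x)$ is increasing, $g(\bar\mu) \geq g(\mu_0) = \varepsilon' := \mu_0/\varepsilon$. Therefore $\bar\mu = g(\bar\mu) f(\bar\mu) \geq \varepsilon' \bar h$. Multiplying by $N$ gives $\E[\sum_i |M_i|] = N\bar\mu \geq \varepsilon' \sum_i H(M_i)$.

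The main obstacle is the passage from per-coordinate bounds to the aggregate bound. Coordinates with tiny entropy can have $\mu_i \ll h_i$, because $f(x) \approx x\log_2(1/x)$ near $0$, so no per-coordinate linear bound holds. Concavity of $f$ is exactly what rescues the averaged statement, so the derivative check above is the step to verify carefully. A secondary point is to confirm that the inequality $h_i \le f(\mu_i)$ is established in the earlier proof without using the hypothesis $H(M)\ge\varepsilon$; it is, since that hypothesis only enters the final step. The case $\mu_i = 0$ also works, because then $M_i$ is the empty string and $h_i = 0 = f(0)$.
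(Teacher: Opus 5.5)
Your proof is correct, and it takes a different route from the paper's. The paper uses Proposition~\ref{bitlengthlemma} only as a black box. It calls indices with $H(M_i) < \varepsilon/2$ low-entropy and notes they contribute at most $\varepsilon N/2$, so the remaining indices carry at least half of $\sum_i H(M_i)$. It then applies the proposition with threshold $\varepsilon/2$ to each remaining index and discards the low-entropy ones, losing a factor of $2$.

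You instead extract from the proposition's proof the unconditional inequality $H(M_i) \le f(\E[|M_i|])$. You then aggregate over indices with Jensen's inequality, using that $f$ is concave. Your derivative $f'(x) = 1 + \log_2(1 + 1/x)$ is right, and $f(0)=0$ by continuity, so concavity holds on all of $[0,\infty)$.

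What your route costs is an extra verification and a dependence on the internals of the earlier proof rather than its statement. What it buys is that the lemma becomes literally the single-variable argument applied to the averages $\bar h$ and $\bar\mu$. This has two consequences:
\begin{itemize}
\item You keep the constant $f^{-1}(\varepsilon)/\varepsilon$ instead of the paper's $f^{-1}(\varepsilon/2)/\varepsilon$. Yours tends to $1$ rather than $1/2$ as $\varepsilon \to \infty$.
\item Concavity together with $f(0)=0$ gives that $f(x)/x$ is non-increasing, hence that $g(x) = x/f(x)$ is increasing. The paper asserts this monotonicity without proof.
\end{itemize}
For the paper's purposes only the existence of some absolute constant matters, so the two arguments are interchangeable there. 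The paper's version is shorter, and yours is sharper.
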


\begin{proof}
    Let $I_{\mathsf{small}} = \{i \in [N] : H(M_i) < \frac{\varepsilon}{2}\}$, and let $I_{\mathsf{large}} = [N] \setminus I_{\mathsf{small}}$. Observe that we have $\sum_{i \in I_{\mathsf{small}}}H(M_i) \leq \frac{\varepsilon N}{2}$, which implies $\sum_{i \in I_{\mathsf{large}}}H(M_i) \geq \sum_{i = 1}^NH(M_i) - \frac{\varepsilon N}{2} \geq \frac{1}{2}\sum_{i = 1}^NH(M_i)$. By Proposition \ref{bitlengthlemma}, there is a constant $\varepsilon' > 0$ such that $\E[|M_i|] \geq \varepsilon'H(M_i)$ for all $i \in I_{\mathsf{large}}$, which gives us $\E[\sum_{i = 1}^N|M_i|] \geq \E[\sum_{i \in I_{\mathsf{large}}}|M_i|] \geq \varepsilon'\sum_{i \in I_{\mathsf{large}}}H(M_i) \geq \frac{\varepsilon'}{2}\sum_{i = 1}^NH(M_i)$.
\end{proof}

Finally, to prove Lemmas \ref{sync-cov-lemma} and \ref{async-cov-lemma}, we assume that the parties' messages are encoded with a prefix-free code, so that the concatenation $m_1 || m_2 || \dots || m_k$ of any messages can be decoded into the vector $(m_1, \dots, m_k)$. We can assume this without loss of generality since we \linebreak can get prefix-free messages without an asymptotic cost via Elias delta encoding \cite{elias}.

\begin{proof}[Proof of Lemma \ref{sync-cov-lemma}.]
    Suppose $S = \{s_1, \dots, s_{|S|}\}$. As the protocol takes $R$ rounds, we can represent $T_{S \leftrightarrow i}$ with an $R \times |S| \times 2$ multidimensional array $A$ which is defined as follows: For \linebreak all $(r, j) \in [R] \times [|S|]$, $A[r][j][0]$ is filled with the concatenation of the messages $i$ received in order from $s_j$ on round $r$ (an empty string if $i$ received no messages from $s_j$), and likewise $A[r][j][1]$ is filled with the concatenation of the messages $s_j$ received from $i$ on round $r$.

    Suppose we encode the transcript $T_{S \leftrightarrow i}$ into an array $A$ via the encoding described above. For each $c = (c_1, c_2, c_3) \in [R] \times [|S|] \times [2]$, let $M_c$ refer to random string in the cell $A[c_1][c_2][c_3]$. We want to lower bound the expected value of $B = \sum_{c}|M_c|$, as $B$ is the number of bits sent. For this, we observe that the transcript $T_{S \leftrightarrow i}$ is fully described by the cells $M_c$, which means that we have $\sum_{c}H(M_c) \geq H(T_{S \leftrightarrow i}) \geq \varepsilon R|S|$ by the sub-additivity of entropy. By Lemma \ref{arraylemma}, \linebreak this gives us $\E[B] = \E[\sum_{c}|M_c|] \geq \varepsilon'\sum_{c}H(M_c) \geq \varepsilon'H(T_{S \leftrightarrow i})$ for some constant $\varepsilon' > 0$.
\end{proof}

\begin{proof}[Proof of Lemma \ref{async-cov-lemma}.]
    Suppose $S = \{s_1, \dots, s_{|S|}\}$. We encode the transcript $T_{S \rightarrow i}$ into an $|S|$-element array $A$ as follows: For each $j \in [|S|]$, we fill the cell $A[j]$ with the concatenation of the messages $i$ received from $s_jj$, in the order $i$ received them. Under the ordering assumption we made about how the adversary schedules messages, we know for all $j < j'$ that $i$ receives all of the messages represented by $A[j]$ from $s_j$ before receiving any of the messages represented by $A[j']$ from $s_{j'}$; hence, under this scheduling assumption, the array represents the transcript without any information loss.
    
    We want to lower bound the expected value of $B = \sum_{j = 1}^{|S|}|A[j]|$, since this is the number of bits that $i$ receives from $S$. As the transcript $T_{S \rightarrow i}$ is fully described by the array $A$, we have $\sum_{j=1}^{|S|}H(A[j]) \geq H(T_{S \rightarrow i}) \geq \varepsilon |S|$ by the sub-additivity of entropy. Finally, by Lemma \ref{arraylemma}, we have $\E[B] = \E[\sum_{j = 1}^{|S|}|A[j]|] \geq \varepsilon'\sum_{j=1}^{|S|}H(A[j]) \geq \varepsilon'H(T_{S \rightarrow i})$ for some constant $\varepsilon' > 0$.
\end{proof}

\section{Bounds for All Sufficiently Large \texorpdfstring{\mathversion{bold}$n$}{n}} \label{infinitely-many}

In Sections \ref{syncsection} and \ref{asyncsection}, we prove communication complexity lower bounds that hold assuming certain adversary structures exist. That is, we prove that whenever $n$ and $d$ are such that the appropriate adversary structure $\Z_\pj^{n,d}$ or $\Z_\tpj^{n,d}$ exists, the communication complexity of a task is lower bounded by $\Omega(Ln^{1 + 1/d})$ or $\Omega(Ln^{2 + 1/d})$. In this section, we extend our bounds by showing that they hold for all sufficiently large $n$, if $d$ is any fixed constant. We begin by showing this for the bounds that use the adversary structure family $\Z_\pj^{n,d}$.

Let $p_1, p_2, p_3, \dots$ be the list of all prime powers in ascending order, let $q(n,d) \geq 1$ be the greatest integer such that $\sum_{k=0}^d(p_{q(n,d)})^k = \frac{(p_{q(n,d)})^{d+1} - 1}{p_{q(n,d)}-1} \leq \frac{n}{2}$, and let $s_{n,d} = \frac{(p_{q(n,d)})^{d+1} - 1}{p_{q(n,d)}-1}$. The adversary structure $\Z_\pj^{n,d}$ exists for any $d \geq 1$ if and only if $n \geq 2^{d+2} - 2$ (so that $q(n,d)$ is well-defined) and $n = 2s_{n,d}$. Recall that in Section \ref{advstructures} we construct $\Z_\pj^{n,d}$ by taking the union of two party sets $U$ and $U'$ of size $s_{n,d}$, where the adversary can always corrupt any party in $U'$. To extend the construction so that it works for all $n \geq 2^{d+2} - 2$, we could redefine $U'$ so that it is a party set of size $n - s_{n,d}$ instead of $s_{n,d}$ while keeping the adversary's ability to corrupt any party in $U'$. Our lower bounds hinge on the fact that $|U'| = \Theta(n)$, which still holds since we will now have $|U'| = n - 2s_{n,d} \geq \frac{n}{2}$, and on the fact that $|U| = s_{n,d} = \Omega(n)$ (with this implying that each party in $U$ belongs to a $\Theta(|U|^{-1/d}) = \Theta(n^{-1/d})$ fraction of the projective geometry quorums), which is no longer guaranteed. Below, we show that there exists some $N_d \geq 2^{d+2} - 2$ such that $n \geq N_d \implies |U| = s_{n,d} > \frac{n}{4}$, which means that the guarantees we want hold for all $n \geq N_d$.

Observe that for any fixed $d$, we have $\lim_{n \rightarrow \infty}q(n,d) = \infty$. So, by the prime number theorem \cite{Yan2000}, which implies that the ratio of successive primes (and thus the ratio of successive prime powers) tends to $1$, we have $\lim_{n \rightarrow \infty}\frac{p_{q(n,d) + 1}}{p_{q(n,d)}} = 1$. This means that there exists some $N_d \geq 2^{d+2} - 2$ such that for all $n \geq N_d$ we have $\frac{p_{q(n,d) + 1}}{p_{q(n,d)}} \leq \sqrt[d+1]{2}$. For all $n \geq N_d$, this gives us $2s_{n,d} = \frac{2(p_{q(n,d)})^{d+1} - 2}{p_{q(n,d)}-1} \geq \frac{(p_{q(n,d) + 1})^{d+1} - 2}{p_{q(n,d)}-1} > \frac{(p_{q(n,d) + 1})^{d+1} - 1}{p_{q(n,d)}} \geq \frac{(p_{q(n,d) + 1})^{d+1} - 1}{p_{q(n,d) + 1} - 1} > \frac{n}{2}$, where the last inequality holds due to $q(n,d)$ being the greatest integer such that $\frac{(p_{q(n,d)})^{d+1} - 1}{p_{q(n,d)}-1} \leq \frac{n}{2}$. In conclusion, we have $n \geq N_d \implies s_{n,d} > \frac{n}{4}$, as desired.

The calculations for $\Z_\tpj^{n,d}$ (which our core set agreement lower bound uses) can be done analogously, by redefining $q(n,d) \geq 1$ to be the greatest integer such that $\sum_{k=0}^d(p_{q(n,d)})^k = \frac{(p_{q(n,d)})^{d+1} - 1}{p_{q(n,d)}-1} \leq \frac{n}{3}$ (well-defined when $n \geq 3(2^{d+1} - 1)$), recalling that $\Z_\tpj^{n,d}$ partitions the parties into three sets $U_1, U_2, U'$ of size $s_{n,d} = \frac{(p_{q(n,d)})^{d+1} - 1}{p_{q(n,d)}-1}$ each, increasing the number of parties in $U'$ from $s_{n,d}$ to $n - 2s_{n,d}$ to make the construction work whenever $n \geq 3(2^{d+1} - 1)$, and using the same calculations as above to show that there is some $N_d \geq 3(2^{d+1} - 1)$ such that $n \geq N_d \implies 2s_{n,d} > \frac{n}{3}$. When $n \geq N_d$, we would get $|U_1| = |U_2| > \frac{n}{6}$ and $|U'| \geq \frac{n}{3}$, \linebreak and these suffice for our core set agreement lower bound (Theorem \ref{acsthm}) to hold.

\end{document}